\newcommand{\no}{\nonumber}
\newtheorem{definition}{Definition}
\newtheorem{lemma}{Lemma}
\renewcommand{\@fnsymbol}[1]{%
  \ensuremath{%
    \ifcase#1
      \ast
    \or
      \dagger
    \or
      \ddagger
    \or
      \S
    \or
      \P
    \or
      \ast\ast
    \or
      \dagger\dagger
    \else
      \@ctrerr
    \fi
  }%
}
\begin{document}

\preprint{APS/123-QED}

\title{A Quantum Algorithm for Nonlinear Electromagnetic Fluid Dynamics via Koopman-von~Neumann Linearization}

\author{Hayato Higuchi$^{*}$}
\email{higuchi.hayato.007@gmail.com}
\affiliation{{\it International Research Center for Space and Planetary Environmental Science, Kyushu University, Motooka, Nishi-ku, Fukuoka 819-0395, Japan}}
\affiliation{{\it QunaSys Inc., Hakusan, Bunkyo-ku, Tokyo 113-0001, Japan}}

\author{Yuki Ito$^{*}$}
\email{yuki.itoh.osaka@gmail.com}
\affiliation{%
Graduate School of Engineering Science, The University of Osaka,\\
1-3 Machikaneyama, Toyonaka, Osaka 560-8531, Japan
}

\author{Kazuki Sakamoto}
\email{kazuki.sakamoto.osaka@gmail.com}
\affiliation{%
Graduate School of Engineering Science, The University of Osaka,\\
1-3 Machikaneyama, Toyonaka, Osaka 560-8531, Japan
}

\author{Keisuke Fujii}
\email{fujii.keisuke.es@osaka-u.ac.jp}
\affiliation{%
Graduate School of Engineering Science, The University of Osaka,\\
1-3 Machikaneyama, Toyonaka, Osaka 560-8531, Japan
}
\affiliation{%
Center for Quantum Information and Quantum Biology, The University of Osaka, 
 1-2 Machikaneyama, Osaka 560-0043, Japan
}
\affiliation{RIKEN Center for Quantum Computing (RQC), Hirosawa 2-1, Wako, Saitama 351-0198, Japan}

\author{Akimasa Yoshikawa}
\email{yoshikawa.akimasa.254@m.kyushu-u.ac.jp}
\affiliation{{\it International Research Center for Space and Planetary Environmental Science, Kyushu University, Motooka, Nishi-ku, Fukuoka 819-0395, Japan}}
\affiliation{{\it Faculty of Science, Kyushu University, Motooka, Nishi-ku, Fukuoka 819-0395, Japan}}

\begingroup
\renewcommand{\thefootnote}{$*$}
\footnotetext{The authors contributed equally to this work.}
\endgroup
\setcounter{footnote}{0}

\date{\today}

\begin{abstract}
To simulate plasma phenomena, large-scale computational resources have been employed in developing high-precision and high-resolution plasma simulations. 
One of the main obstacles in plasma simulations is the requirement of computational resources that scale polynomially with the number of spatial grids, which poses a significant challenge for large-scale modeling.
To address this issue, this study presents a quantum algorithm for simulating the nonlinear electromagnetic fluid dynamics that govern space plasmas.
We map it, by applying Koopman-von~Neumann linearization, to the Schr\"{o}dinger equation and evolve the system using Hamiltonian simulation via quantum singular value transformation.
Our algorithm scales $O \left(s N_x \, \mathrm{polylog} \left( N_x \right) T \right)$ in time complexity with $s$, $N_x$, and $T$ being the spatial dimension, the number of spatial grid points per dimension, and the evolution time, respectively.
Comparing the scaling $O \left( s N_x^s \left(T^{5/4}+T N_x\right) \right)$ for the classical method with the finite volume scheme, this algorithm achieves polynomial speedup in $N_x$.
The space complexity of this algorithm is exponentially reduced from $O\left( s N_x^s \right)$ to $O\left( s \, \mathrm{polylog} \left( N_x \right) \right)$.
Numerical experiments validate that accurate solutions are attainable with smaller $m$ than theoretically anticipated and with practical values of $m$ and $R$, underscoring the feasibility of the approach.
As a practical demonstration, the method accurately reproduces the Kelvin-Helmholtz instability, underscoring its capability to tackle more intricate nonlinear dynamics. 
These results suggest that quantum computing can offer a viable pathway to overcome the computational barriers of multiscale plasma modeling.
\end{abstract}

\maketitle

\section{Introduction}
Nonlinear plasma simulations are indispensable for predicting and elucidating a wide range of phenomena, including space weather~\cite{Eastwood2017,Fuselier2024}, astrophysical jets~\cite{Komissarov2021}, magnetic confinement fusion plasmas~\cite{Hoelzl2024}, and plasma-assisted semiconductor manufacturing~\cite{Versolato2024,Kim2024}.  
In conventional plasma simulation research, large-scale computations have been conducted using high-performance computing (HPC) with methods such as the finite volume schemes~\cite{Godunov1959,Patankar1980}. 
However, achieving higher accuracy in significantly larger-scale systems, which is required for numerous practical applications as in Ref.~\cite{Yang2025,Garcia2025}, is impractical even with state-of-the-art HPC resources.  
Therefore, fundamentally new computational approaches are required to realize both higher accuracy and scalability in plasma simulations~\cite{Yang2025,Garcia2025}.

Quantum computing provides a potential solution to this challenge.
As long-standing researches have investigated, quantum computers are expected to solve certain problems with exponentially smaller time and space resources than classical computers.
For example, solving linear systems of equations~\cite{Harrow2009, costa_optimal_2022, morales2024quantum} and simulating quantum dynamics~\cite{feynman1986quantum, kitaev2002classical, lloyd1996universal, low2017optimal, low2019hamiltonian, Gilyen2019} are proven to have exponential quantum advantage under the plausible complexity theoretic assumption.
Recently, quantum algorithms have been proposed for a variety of classical linear partial differential equations~(PDEs), 
e.g., the heat and wave equations.
These algorithms apply quantum linear systems algorithms~(QLSAs)~\cite{berry2014high, berry2017quantum, linden2022quantum, krovi2023improved, berry2024quantum, low2024quantum} and Hamiltonian simulation~\cite{Costa2019, jin2024quantum, Jin2023a, Jin2023b,Lubasch2025}, some of which suggest the potential for exponential quantum advantage.
For nonlinear dynamics, recent studies have proposed the quantum algorithms based on Carleman linearization~\cite{Carleman1932,Kowalski1991,Liu2021,krovi2023improved} or Koopman-von Neumann (KvN) linearization~\cite{Koopman1931,Neumann1932,Engel2021,Tanaka2024}, which map nonlinear systems to linear ones.
However, the classes of systems solvable by such methods remain limited, and the development of quantum algorithms applicable to practical problems, as well as the analysis of their computational overhead, has not yet been sufficiently explored.

In this study, we propose a quantum algorithm for electromagnetic fluid dynamics, which represents a fundamental class of nonlinear plasma simulations, and demonstrate its quantum advantage through a rigorous complexity analysis. 
Given an initial state,
our algorithm generates the time-evolved quantum state containing the desired solutions.
To end this,
we assume the plasma density is constant in time,
apply a variable transformation inspired by the energy conservation law, and employ a tailored discretization scheme.
In our algorithm, we successfully transform the target system of equations into the Schr\"{o}dinger equation using KvN linearization based on Ref.~\cite{Tanaka2024}.
While the resulting Schr\"{o}dinger equation is infinite dimensional to capture the nonlinearity, 
we truncate it to a finite dimensional system with a sufficiently large dimension to achieve the desired accuracy.
This enables us to simulate the dynamics on a quantum computer using quantum singular value transformation (QSVT)~\cite{Gilyen2019}.
Consequently, 
the proposed method reduces the time complexity from the classical finite volume scaling of $O\left(s N_x^{s}(T^{5/4}\epsilon^{-1/4}+T N_x) \right)$ 
to $O \left(s \, \mathrm{polylog} \left( N_x \right) \left( N_x T + \log \left(1/\epsilon \right) \right) \right)$, 
where $s$ is the spatial dimension,
$N_x$ is the number of grid points per direction, 
$T$ is the evolution time, 
and $\epsilon$ is the additive error between the output state and the exact time-evolved state.
Thus, we have a polynomial speedup with respect to $N_x$ 
and an exponential speedup with respect to $s$.
This method also improves the space complexity from $O(sN_x^s)$ to $O\left( s \, \mathrm{polylog} \left( N_x \right) \right)$.
This means that our method has an exponential space advantage over the classical methods.

In addition to the theoretical analysis above, 
we perform numerical simulations to quantify the truncation error achievable with the proposed algorithm.  
To investigate error behavior, we simulate 1D electromagnetic fluid dynamics with the proposed algorithm in the following three cases:
As the first case, we provide input that reproduces the linear plasma oscillation solution, known as simple harmonic motion, to examine the accuracy of the proposed algorithm for linear solutions by comparing it with the exact solution.
The result shows excellent agreement with the exact solution, maintaining high fidelity with a relative error below \(<10^{-2}\) over about $8$ oscillation periods.
In the second case, 
we conduct a nonlinear advection test with varying the KvN truncation order $m$, 
to investigate the algorithm’s capability to resolve nonlinear behavior, 
depending on the KvN truncation order.
This case shows that accuracy improves only when the KvN truncation order \(m\) and the truncation index of the QSVT polynomial approximation \(R\) are increased simultaneously; enlarging either parameter in isolation is insufficient.
The third case is a nonlinear advection test with varying numbers of grid points $N_x$, 
intended to assess the algorithm’s scalability with increasing system size.
We confirm that the scheme attains even higher accuracy as the number of grid points is increased.
This establishes that increasing the number of qubits ensures further higher accuracy.
Furthermore, to assess the algorithm’s performance on a practical nonlinear problem, we next simulate the 2D Kelvin–Helmholtz instability~\cite{Helmholtz1868,Thomson1871}.  
The Kelvin–Helmholtz instability serves as a rigorous benchmark test for accurately capturing multiscale and nonlinear physics across a wide range of domains~\cite{Faganello2017,Berlok2019,Wang2017}.
As benchmarks, both the nonlinear vortex roll up and the growth rate of the unstable mode agree closely with theoretical predictions, confirming that the proposed algorithm can faithfully reproduce the Kelvin–Helmholtz instability.

These numerical experiments validate that, in practical problem settings, sufficient accuracy can be attained with smaller than predicted values of $m$ and $R$.
This work provides one of the earliest demonstrations of a quantum algorithm applied to a practical nonlinear classical system governed by partial differential equations, supported by a set of numerical benchmarks for nonlinear dynamics.
These findings extend the practical scope of quantum computing and point toward a viable pathway for overcoming the computational bottlenecks that currently limit large-scale plasmas and other nonlinear fluid systems.

This paper is organized as follows.  
Sec.~\ref{sec:Preliminary} reviews the electromagnetic fluid dynamics, introduces the KvN linearization, 
and summarises the essentials of QSVT-based Hamiltonian simulation.  
Sec.~\ref{sec:Application} then details the application of KvN linearization to the target systems and provides a theoretical analysis of the resulting computational complexity and error bounds.  
Sec.~\ref{sec:Results} demonstrates the proposed approach on four test problems using a quantum circuit emulator, presenting the corresponding numerical results.  
Finally, Sec.~\ref{sec:Conclusion} is devoted to the conclusion and discussion.

\section{PRELIMINARY}\label{sec:Preliminary}
\subsection{Electromagnetic Fluid Dynamics}
Based on Refs.~\cite{Baumjohann1996,Goldston1995}, we introduce electromagnetic fluid dynamics as our target.
We define a plasma density~$n$, a fluid velocity~$\mathbf{u}$, an electric field~$\mathbf{E}$ and magnetic field~$\mathbf{B}$ in 3D physical space as
\begin{align}
n &= n(x,y,z),\\
\mathbf{u} &= (u_1(x, y, z, t), u_2(x, y, z ,t), u_3(x, y, z, t)),\\
\mathbf{E} &= (E_1(x, y, z, t), E_2(x, y, z ,t), E_3(x, y, z, t)),\\
\mathbf{B} &= (B_1(x, y, z, t), B_2(x, y, z, t), B_3(x, y, z, t)).
\end{align}
The coupled system of incompressible, pressureless fluid equations in an electromagnetic field and Maxwell's equations is given as:
\begin{align}
    0 &= \nabla \cdot (n\mathbf{u}),  \label{eq:conti} \\
    \frac{\partial \mathbf{u}}{\partial t}  &= -\left( \mathbf{u} \cdot \nabla \right) \mathbf{u} + \frac{q}{m_q} (\mathbf{E} + \mathbf{u} \times \mathbf{B}), \label{eq:MoE_fluid}\\
    \frac{\partial \mathbf{E}}{\partial t} &= \frac{1}{\epsilon_0 \mu_0} (\nabla \times \mathbf{B}) - \frac{q}{\epsilon_0} n\mathbf{u}, \label{eq:ampere}\\
    \frac{\partial \mathbf{B}}{\partial t} &= -(\nabla \times \mathbf{E}),\label{eq:farady}
\end{align}
where the plasma charge~\( q \), the plasma mass~\( m_q \), permittivity \( \epsilon_0 \) and permeability \( \mu_0 \).
Our target is this nonlinear system.
Note that, according to the continuity equation, Eq.~\eqref{eq:conti} implies a time‑stationary density, i.e., $\frac{\partial n}{\partial t}=0$.

The energy equation derived from Eq.~\eqref{eq:MoE_fluid} implies the following conservative system in the case of the periodic boundary:
\begin{equation}
    \begin{split}
        \frac{\partial}{\partial t} 
        \Biggl[ \int  &\frac{1}{2} m_q \left( \left\lvert \sqrt{n} \mathbf{u} \right\rvert^2
        + \left\lvert \sqrt{\frac{\epsilon_0}{m_q}} \mathbf{E} \right\rvert^2
        + \left\lvert \sqrt{\frac{1}{\mu_0 m_q}} \mathbf{B} \right\rvert^2 \right)  \\
        &dx dy dz \Biggr] = 0.
    \end{split}
    \label{eq:energy}
\end{equation}
This energy conservation later serves as a key in our study both for guiding the KvN linearization and as a primary metric in the error assessment.

\subsection{Koopman-von Neumann Linearization} \label{subsec:KvN-linearization}
Based on Refs.~\cite{kowalski_nonlinear_1997, Tanaka2024, ito_how_2023}, 
we introduce Koopman-von Neumann (KvN) linearization with the orthonormalized Hermite polynomial $\{H_n(x)\}_{n \in \mathbb{Z}_{\ge 0}}$.
We assume that the following differential equations are given,
\begin{equation} \label{eq:Nd-nonlinear-differential-eq}
    \frac{d}{dt} x_j = F_j (\bm{x}) \hspace{1em} (j \in \{0, \dots, N-1 \}),
\end{equation}
where $\bm{x}=(x_0, \dots, x_{N-1})^\top$, and analytic function $F \colon \mathbb{R}^N \ni \bm{x} \mapsto \left( F_0(\bm{x}), \dots, F_{N-1}(\bm{x}) \right)^\top \in \mathbb{R}^N$.

Let $\{ \ket{n_j} \}_{n_j=0}^\infty$ be number states of the particle number operator $\hat{N}_j$, 
satisfying $\hat{N}_j \ket{n_j}=n_j \ket{n_j}$.
Now we take the Hilbert space spanned by $\{ \otimes_{j=0}^{N-1} \ket{n_j} \}_{n_1, \dots, n_N \in \mathbb{Z}_{\ge 0}}$.
We define the position state $\ket{\bm{x}}=\otimes_{j=0}^{N-1} \ket{x_j} \ \left(x_0, \dots, x_{N-1} \in \mathbb{R} \right)$ as 
\begin{equation}
    \ket{\bm{x}}
    \coloneqq \bigotimes_{j=0}^{N-1} \left( w\left( x_j \right) \sum_{n_j=0}^\infty H_{n_j} (x_j) \ket{n_j} \right), \label{eq:quantum_state}
\end{equation}
where $w(x)=\exp(-x^2/2)$.
Let $\hat{x}_j \ (j \in \{0, \dots, N-1\})$ denote the position operator, 
which satisfies $\hat{x}_j \ket{\bm{x}}= x_j \ket{\bm{x}}$ for $x_0, \dots, x_{N-1} \in \mathbb{R}$.
We also introduce the momentum operator $\hat{k}_j \ (j \in \{0, \dots, N-1\})$,
satisfying $[ \hat{x}_j, \hat{k}_l ] = i \delta_{j, l}$ for $j,l \in \{0, \dots, N-1\}$.
To obtain the Schr\"{o}dinger equation of $\bm{x}(t)$, we assume that 
\begin{equation} \label{eq:div-0}
    \mathrm{div} \mathbf{F}(\bm{x}) = 0
\end{equation}
for $\bm{x} \in \mathbb{R}^N$.
Now, the solution $\bm{x}(t)$ of Eq. \eqref{eq:Nd-nonlinear-differential-eq} meets the differential equation
\begin{equation} \label{eq:koopman-definition}
    i \frac{d}{dt} \ket{\bm{x}(t)} = \hat{H} \ket{\bm{x}(t)},
\end{equation}
where the Hamiltonian $\hat{H}$ is
\begin{equation} \label{eq:hamiltonian-koopman-hermite-poly}
    \hat{H}=\sum_{j=0}^{N-1} \frac{1}{2} \left( \hat{k}_j F_j(\hat{x}_j) +  F_j(\hat{x}_j) \hat{k}_j \right).
\end{equation}
Therefore, $\ket{\bm{x}(T)}$ satisfies
\begin{align} \label{eq:koopman-definition-tr0}
    \ket{\bm{x}(T)} &= \exp(-i\hat{H}T)\ket{\bm{x}(0)}.
\end{align}

Next, 
following Ref.~\cite{Tanaka2024},
we introduce the ODE solvable with the KvN linearization.
\begin{definition}[Definition 2 in \cite{Tanaka2024}]\label{def:ODE-KvN}
    A system of $N$-dimensional ODEs in the interval $[0, T]$ solvable with the Koopman-von Neumann linearization 
    is a system of nonlinear ODEs defined by a family $\mathcal{S}$ of index sets 
    each of which determines the variables engaged in an interaction, 
    and coupling constants $\alpha_{p \rightarrow j}$ of the interactions:
    \begin{align} \label{eq:ODE-KvN}
        \frac{dx_j}{dt} = F_j(\bm{x}) = \sum_{p \in \mathcal{S} : j \in p} \alpha_{p \rightarrow j} \prod_{l \in p \setminus \{j\}} x_l \quad \left(j \in [N]\right),
    \end{align}
    where the index sets $p \in \mathcal{S}$, and the coupling constants $\alpha_{p \rightarrow j}$ satisfy the followings:
    \begin{enumerate}
        \item For any $p \in \mathcal{S}, \ 2 \leq |p| \leq d$.
        \item For any $j \in [N]$, there exists at least one and at most $c$ sets $p$ such as $j \in p$.
        \item For any $p = (j_1, \cdots, j_l) \in \mathcal{S}$, real numbers $\alpha_{p \rightarrow j_1}, \cdots, \alpha_{p \rightarrow j_l} \in \mathbb{R}$ are assigned such that $\sum_{j \in p} \alpha_{p \rightarrow j} = 0$.
    \end{enumerate}
\end{definition}
\noindent
Note that the form of Eq.~\eqref{eq:ODE-KvN} is constructed to satisfy Eq.~\eqref{eq:div-0}.
The conditions from 1 to 3 in Definition~\ref{def:ODE-KvN} are required to ensure the sparsity of the mapped Hamiltonian $\hat{H}$ in Eq.~\eqref{eq:hamiltonian-koopman-hermite-poly}.
The condition 3 in Definition~\ref{def:ODE-KvN} also ensures $\prod_{j=0}^{N-1} w\left( x_j \right)$ remains constant over time,
i.e., $\frac{\partial}{\partial t}\lVert \bm{x} \rVert^2 = 0$, 
and simplifies the time evolution of the amplitude.

Finally, we explain the truncation of the Hamiltonian in Eq.~\eqref{eq:hamiltonian-koopman-hermite-poly}, 
which restricts its action to a finite dimensional Hilbert space. 
Since a quantum computer can only operate on finite dimensional Hilbert space, 
this truncation enables the simulation of the dynamics governed by the truncated Hamiltonian on a quantum computer.
If we assume that up to $m$ particles are allowed in each mode, 
simulating Eq.~\eqref{eq:koopman-definition} requires $O(m^N)$ variables, 
which corresponds to $O(N \log m)$ qubits.
Since the number of qubits is proportional to the number of variables $N$, 
this method is not efficient in terms of the space complexity.
To achieve a logarithmic space overhead for a quantum computer with respect to $N$,
following Refs.~\cite{Engel2021, Tanaka2024},
we restrict the total particle number to at most $m$, 
which needs $O(N^m)$ variables.
This truncation leads to a reduction in the number of using qubits from $O(N \log m)$ to $O(m \log N)$,
which implies exponential reduction in the number of qubits with respect to the number of variables $N$.

\subsection{Hamiltonian simulation based on quantum singular value transformation} \label{subsec:implement-hamiltonian-simulation}

We introduce how to implement the time evolution operator based on quantum singular value transformation (QSVT)~\cite{Gilyen2019,Martyn2021}. 
QSVT is a quantum algorithm that applies polynomial transformations to the singular values of a given matrix.
Specifically, we can transform a Hamiltonian $\tilde{H}$ into a polynomial approximation of an exponential function as $P_R(\tilde{H})\approx \exp{(-i\tilde{H}T)}$,
where $R$ is the truncation index of Jacobi-Anger expansion.

Now, we describe the details of the Hamiltonian simulation using QSVT.
First, we construct a block encoding of Hamiltonian $\tilde{H}$, which encodes the Hamiltonian as the upper left block of a unitary matrix $U$.
That is, we implement $(\alpha, l, \epsilon)$-block encoding $U$ of $\tilde{H}$, which satisfies
\begin{equation}
    \left\lVert \tilde{H} - \alpha \left( \bra{0}^{\otimes l} \otimes I \right) U \left( \ket{0}^{\otimes l} \otimes I \right)  \right\rVert \le \epsilon.
\end{equation}

Next, we provide a polynomial approximation of $\exp(-ix\tau)$. 
Note that we can implement the time evolution operator $\exp(-i\tilde{H}T)$ using the substitutions of $x = \tilde{H}/\alpha$ and $\tau = \alpha T$.
Considering that
\begin{align}
    \exp\left(-ix\tau\right) = \cos\left(x\tau\right)-i\sin\left(x\tau \right),
\end{align}
it suffices to have polynomial approximations of $\cos \left(x \tau\right)$ and $\sin \left(x \tau\right)$.
To obtain this, we recall that
the Jacobi-Anger expansions of $\cos \left(x \tau\right)$ and $\sin \left(x \tau\right)$ are
\begin{align}
 & \cos \left(x \tau\right)=J_{0} (\tau)+2\sum _{k=1}^{\infty } (-1)^{k} J_{2k} (\tau)T_{2k} \left(x\right),\no\\
 & \sin \left(x \tau\right)=2\sum _{k=0}^{\infty } (-1)^{k} J_{2k+1} (\tau)T_{2k+1} \left(x \right),\label{eq:cos_sin_inf}
\end{align}
where $J_{m} (\tau)$ is the $m$-th Bessel function of the first kind and $T_{k} (x)$ is the $k$-th Chebyshev polynomial of the first kind.
Then we can use
the truncated series at an index $R$:
\begin{align}
P_{R}^{\cos}(x) & :=J_{0} (\tau)+2\sum _{k=1}^{R} (-1)^{k} J_{2k} (\tau)T_{2k} (x),\no\\
P_{R}^{\sin}(x) & :=2\sum _{k=0}^{R} (-1)^{k} J_{2k+1} (\tau)T_{2k+1} (x),
\label{eq:Difine_Pcos_Psin}
\end{align}
which are polynomial approximations.
According to Lemma 57 of Ref.~\cite{Gilyen2019}, the errors of this polynomial approximations are upper-bounded by
\begin{align}
\| \cos (x\tau)-P_{R}^{\cos}(x)\|  & \leq \frac{5}{4}\left(\frac{e|\tau|}{4( R+1)}\right)^{( 2R+2)} ,\no\\
\| \sin (x\tau)-P_{R}^{\sin}(x)\|  & \leq \frac{5}{4}\left(\frac{e|\tau|}{2( 2R+3)}\right)^{( 2R+3)} ,\label{eq:cos_sin_R}
\end{align}
for $x\in [-1,1]$.
Substituting $x= \tilde{H}/ \alpha$ and $\tau=\alpha T$, we obtain 
\begin{align}
&\left\Vert \exp (-i\tilde{H}T)-\left( P_{R}^{\cos}( \tilde{H}T) -iP_{R}^{\sin} ( \tilde{H}T)\right)\right\Vert \nonumber\\
&\quad\quad\leq \frac{5}{4}\left(\frac{e|\alpha T|}{4( R+1)}\right)^{( 2R+2)} \no\\
&\quad\quad\quad\quad+\frac{5}{4}\left(\frac{e|\alpha T |}{2( 2R+3)}\right)^{( 2R+3)}  
\label{eq:epsilon_QSVT} .
\end{align}
Note that the error decays exponentially in $R$, which leads to an efficiency of a quantum algorithm for Hamiltonian simulation.

According to Ref.~\cite{Gilyen2019} (see also Ref.~\cite{Toyoizumi2024}), 
by utilizing the $(\alpha,a,\epsilon/|2T|)$-block encoding $U$ of Hamiltonian $\tilde{H}$ and $P_{R}^{\cos}(x), P_{R}^{\sin}(x)$,
we can implement an $(1, a+2, \epsilon)$-block encoding $U_{\text{exp}}$ of $\frac{1}{2}\exp\left(-i \tilde{H} T\right)$ as shown in Fig.~\ref{fig:Uexp}: 
\begin{align}
    \left\Vert\frac{\exp\left(-i \tilde{H} T \right)}{2}-\langle 0|_{0,1,2} U_{\text{exp}} |0\rangle_{0,1,2}\right\Vert\leq \epsilon,
    \label{eq:QSVT_expH}
\end{align}
\noindent
where $|0\rangle_{0,1,2}$ means the $0,1,2$ ancilla qubit states corresponding to Fig.~\ref{fig:Uexp}.
The prefactor $1/2$ of $\exp(-i \tilde{H} T )$ comes from taking a linear combination of two operators $P_{R}^{\cos}( \tilde{H}T)$ and $P_{R}^{\sin} ( \tilde{H}T)$.
This prefactor can be canceled by oblivious amplitude amplification (OAA)~\cite{Gilyen2019} technique.
Using OAA, the total number of queries to the block encoding $U$ is given by $3\left(2R+1\right)$.

Eq.~\eqref{eq:epsilon_QSVT} gives the operator‑norm error that arises 
when the series is truncated at a finite degree $R$.
We have to note that our proposed scheme accumulates not only the KvN linearization error but also an additional error introduced by the QSVT procedure.
As $R$ increases, the error in Eq.~\eqref{eq:epsilon_QSVT} gets smaller. 
Moreover, the QSVT error plays a role analogous to the classical CFL condition and is therefore crucial when discussing subsequent numerical stability.

\begin{figure*}
\begin{center}
\rotatebox{270}{
\includegraphics[scale=0.7]{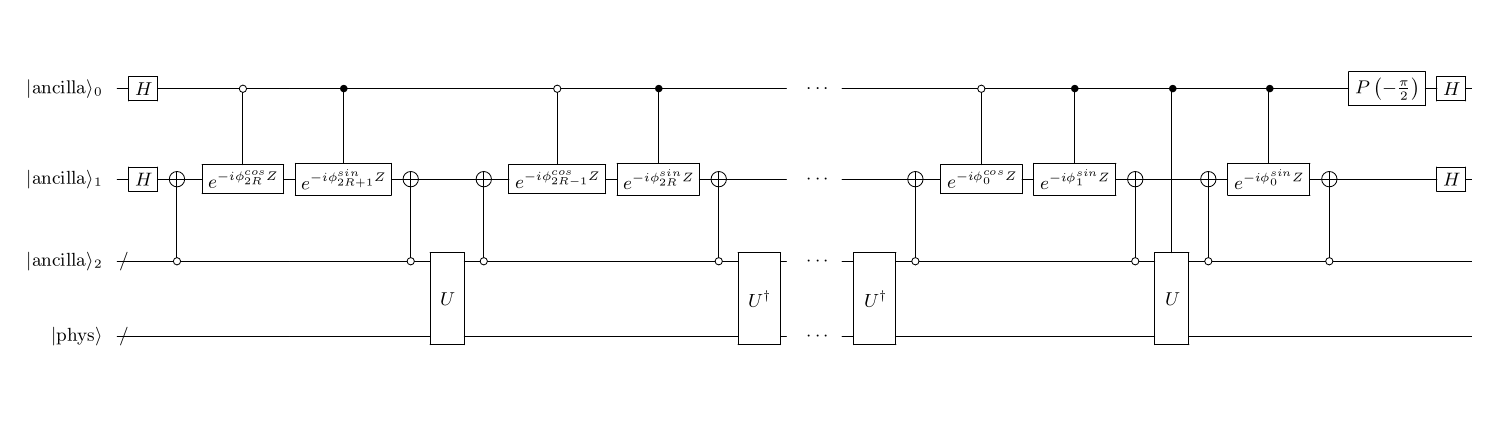}
}
\caption{The quantum circuit $U_{\text{exp}}$ for an $(1, a+2, \epsilon)$-block encoding of the time evolution operator $\frac{1}{2}\exp(-i\tilde{H}T)$. 
Here $U$ is an $(\alpha,a,\epsilon/|2T|)$-block encoding of Hamiltonian $\tilde{H}$.
See also Ref.~\cite{Toyoizumi2024}.}
\label{fig:Uexp}
\end{center}
\end{figure*}

\section{Quantum algorithm for electromagnetic fluid dynamics}\label{sec:Application}
In this section, we construct a quantum algorithm for electromagnetic fluid dynamics.
First, we discretize the target equations and 
obtain the mapped equations by KvN linearization.
Next, we provide the computational complexity of the proposed method, comparing that of a classical simulation.
We also construct the QSVT for Hamiltonian simulation of mapped Schr\"{o}dinger equation.
Finally, we evaluate the scaling of total error and discuss numerical stability.

\subsection{Mapping to the Schr\"{o}dinger equation}
We discretize the target equations and map them onto the Schr\"{o}dinger equation via the KvN linearization 
to simulate electromagnetic fluid dynamics on quantum computers.
First, 
referring to Eq.~\eqref{eq:energy},
we define $\tilde{\mathbf{u}}=\sqrt{n} \mathbf{u}$, $\tilde{\mathbf{E}}=\sqrt{\epsilon_0/m_q} \mathbf{E}$ and $\tilde{\mathbf{B}}=\sqrt{1/ (\mu_0 m_q)} \mathbf{B}$
such that we obtain suitable target equations for KvN linearization.
Then, the target Eqs.~\eqref{eq:MoE_fluid}, \eqref{eq:ampere} and \eqref{eq:farady} under Eq.~\eqref{eq:conti}
become
\begin{align}
    \frac{\partial \tilde{\mathbf{u}}}{\partial t}
    &= \frac{1}{4n\sqrt{n}}\left(\tilde{\mathbf{u}}\cdot \left(\nabla n\right) \right) \tilde{\mathbf{u}}
    - \frac{1}{2\sqrt{n}}\left(\nabla\cdot \tilde{\mathbf{u}} \right) \tilde{\mathbf{u}} \no
    \\
    & \quad - \frac{1}{\sqrt{n}}\left(\tilde{\mathbf{u}}\cdot \nabla \right) \tilde{\mathbf{u}} +\sqrt{\frac{n}{\epsilon_0m_q}}q \tilde{\mathbf{E}} 
    + \sqrt{\frac{\mu_0}{m_q}}q \tilde{\mathbf{u}} \times \tilde{\mathbf{B}}, \label{eq:modified_motion} \\
    \frac{\partial \tilde{\mathbf{E}}}{\partial t} 
    &= \frac{1}{\sqrt{\epsilon_0 \mu_0}} \left( \nabla \times \tilde{\mathbf{B}} \right) 
    -  \sqrt{\frac{n}{\epsilon_0m_q}} q\tilde{\mathbf{u}}, \label{eq:modified_ampere} \\
    \frac{\partial \tilde{\mathbf{B}}}{\partial t} 
    &= - \frac{1}{\sqrt{\epsilon_0 \mu_0}}\left( \nabla \times \tilde{\mathbf{E}} \right),\label{eq:modified_farady}
\end{align}
with periodic boundary condition.
Next, we discretize $\tilde{\mathbf{u}}, \tilde{\mathbf{E}}$ and $\tilde{\mathbf{B}}$ as
\begin{align}
    \begin{split}
        \tilde{\mathbf{u}}_{\mathbf{j}}(t) 
        &= \left( \tilde{u}_1(\mathbf{j} \Delta \mathbf{r}, t), 
        \tilde{u}_2(\mathbf{j} \Delta \mathbf{r}, t),
        \tilde{u}_3(\mathbf{j} \Delta \mathbf{r}, t)\right) \\
        &{\eqqcolon \left( \tilde{u}_{1,\mathbf{j}} (t), \tilde{u}_{2,\mathbf{j}} (t), \tilde{u}_{3,\mathbf{j}} (t)   \right)},
    \end{split} \\
    \begin{split}
        \tilde{\mathbf{E}}_{\mathbf{j}}(t) 
        &= \left( \tilde{E}_1(\mathbf{j} \Delta \mathbf{r}, t), 
        \tilde{E}_2(\mathbf{j} \Delta \mathbf{r}, t),
        \tilde{E}_3(\mathbf{j} \Delta \mathbf{r}, t)\right) \\
        &{\eqqcolon \left( \tilde{E}_{1,\mathbf{j}} (t), \tilde{E}_{2,\mathbf{j}} (t), \tilde{E}_{3,\mathbf{j}} (t)   \right)},
    \end{split} \\
    \begin{split}
        \tilde{\mathbf{B}}_{\mathbf{j}}(t) 
        &= \left( \tilde{B}_1(\mathbf{j} \Delta \mathbf{r}, t), 
        \tilde{B}_2(\mathbf{j} \Delta \mathbf{r}, t),
        \tilde{B}_3(\mathbf{j} \Delta \mathbf{r}, t)\right) \\
        &\eqqcolon {\left( \tilde{B}_{1,\mathbf{j}} (t), \tilde{B}_{2,\mathbf{j}} (t), \tilde{B}_{3,\mathbf{j}} (t)   \right)},
    \end{split}
\end{align}
where $\mathbf{j}=(j_1, j_2, j_3) \in \left\{0,1, \dots, N_x-1  \right\}^3$,
$\mathbf{j\Delta r}
=(j_1\Delta r_1, j_2\Delta r_2, j_3\Delta r_3) \in 
\{0, \Delta r_1, \dots, (N_x-1) \Delta r_1 \} \times 
\{0, \Delta r_2, \dots, (N_x-1) \Delta r_2 \} \times 
\{0, \Delta r_3, \dots, (N_x-1) \Delta r_3 \}$.
We assume $\Delta r_1, \Delta r_2, \Delta r_3 = O(1/N_x)$.
To satisfy periodic boundary condition, 
we define, for $\mathbf{j}^\prime=(j_1^\prime, j_2^\prime, j_3^\prime) \in \mathbb{Z}^3$, 
$\tilde{\mathbf{u}}_{\mathbf{j}^\prime} \coloneqq \tilde{\mathbf{u}}_{\mathbf{j}}$, 
where $\mathbf{j} \in \left\{0,1, \dots, N_x-1  \right\}^3$ is chosen such that 
$j_1 \equiv j_1^\prime \pmod{N_x}, \ j_2 \equiv j_2^\prime \pmod{N_x}, \  j_3 \equiv j_3^\prime \pmod{N_x}$.
Similarly, 
$\tilde{\mathbf{E}}_{\mathbf{j}^\prime}$ and $\tilde{\mathbf{B}}_{\mathbf{j}^\prime}$ are defined in the same way.
Now, we can represent 
the first, second, and third terms on the right-hand side of Eq.~\eqref{eq:modified_motion} as stated in Lemma~\ref{lem:nabla_3D}.

\begin{lemma} \label{lem:nabla_3D}
Using $\mathbf{r}=(r_1,r_2,r_3)$, 
$\Delta \mathbf{r}=(\Delta r_1,\Delta r_2,\Delta r_3)$, 
and $\mathbf{e}_1=(1, 0, 0), \mathbf{e}_2=(0, 1, 0), \mathbf{e}_3=(0,0,1)$,
we have
\begin{equation} \label{eq:nabla-3D}
    \begin{split}
        &\quad \frac{1}{4n\sqrt{n}}\left(\tilde{\mathbf{u}}\cdot \left(\nabla n\right) \right) \tilde{\mathbf{u}}
        - \frac{1}{2\sqrt{n}}\left(\nabla\cdot \tilde{\mathbf{u}} \right) \tilde{\mathbf{u}} 
        - \frac{1}{\sqrt{n}}\left(\tilde{\mathbf{u}}\cdot \nabla \right) \tilde{\mathbf{u}}\\
        &=\lim_{\Delta \mathbf{r} \to 0}
        \left\{ -
        \sum_{k=1}^3 
        \frac{1}{4 \Delta r_k}\left(\frac{1}{\sqrt{n_{\mathbf{j}+\mathbf{e}_k}}} 
        \tilde{u}_{k,\mathbf{j}
        + \mathbf{e}_k} \tilde{\mathbf{u}}_{\mathbf{j}+2\mathbf{e}_k} \right.\right.
        \\ &\quad\quad\quad\quad\quad\quad\quad\quad\quad\quad\left. \left.
        -\frac{1}{\sqrt{n_{\mathbf{j}-\mathbf{e}_k}}} 
        \tilde{u}_{k,\mathbf{j}-\mathbf{e}_k} \tilde{\mathbf{u}}_{\mathbf{j}-2\mathbf{e}_k}
        \right)
        \right\}.
    \end{split} 
\end{equation}
\end{lemma}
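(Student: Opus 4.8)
\emph{Proof idea.} The claim is a consistency statement for the indicated finite-difference stencil, so the plan is to Taylor-expand its right-hand side about the base grid point and identify the leading term with the left-hand side. Throughout I read $\tilde u_{k,\mathbf j}=\tilde u_k(\mathbf j\Delta\mathbf r)$, $n_{\mathbf j}=n(\mathbf j\Delta\mathbf r)$, and so on, as point samples of smooth fields, fix a physical point $\mathbf r$, and let $\Delta\mathbf r\to 0$ along grids with $\mathbf j\Delta\mathbf r\to\mathbf r$; all fields are assumed sufficiently smooth for the difference quotients below to converge to the corresponding derivatives.

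First I would eliminate the $1/\sqrt n$ prefactors. Since $\tilde{\mathbf u}=\sqrt n\,\mathbf u$ by definition, $\frac{1}{\sqrt{n_{\mathbf j\pm\mathbf e_k}}}\tilde u_{k,\mathbf j\pm\mathbf e_k}=u_k(\mathbf r\pm\Delta r_k\mathbf e_k)$, so the $k$-th bracketed difference on the right-hand side equals
\begin{equation}
    u_k(\mathbf r+\Delta r_k\mathbf e_k)\,\tilde{\mathbf u}(\mathbf r+2\Delta r_k\mathbf e_k)
    -u_k(\mathbf r-\Delta r_k\mathbf e_k)\,\tilde{\mathbf u}(\mathbf r-2\Delta r_k\mathbf e_k).
\end{equation}
Writing $\Phi_k(h)\coloneqq u_k(\mathbf r+h\mathbf e_k)\,\tilde{\mathbf u}(\mathbf r+2h\mathbf e_k)$, this is $\Phi_k(\Delta r_k)-\Phi_k(-\Delta r_k)=2\Delta r_k\,\Phi_k'(0)+O(\Delta r_k^{3})$, so after dividing by $4\Delta r_k$ it tends to $\frac{1}{2}\Phi_k'(0)$, where the chain rule gives $\Phi_k'(0)=(\partial_k u_k)\,\tilde{\mathbf u}+2\,u_k\,(\partial_k\tilde{\mathbf u})$. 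Summing over $k=1,2,3$ with the overall minus sign, the right-hand side converges to $-\frac{1}{2}(\nabla\cdot\mathbf u)\,\tilde{\mathbf u}-(\mathbf u\cdot\nabla)\tilde{\mathbf u}$.

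The remaining step is to rewrite this continuum expression back in the $\tilde{\mathbf u},n$ variables and match it term by term with the left-hand side. Using $\mathbf u=\tilde{\mathbf u}/\sqrt n$ and the product rule, $\nabla\cdot\mathbf u=\frac{1}{\sqrt n}(\nabla\cdot\tilde{\mathbf u})-\frac{1}{2n\sqrt n}(\tilde{\mathbf u}\cdot\nabla n)$ and $(\mathbf u\cdot\nabla)\tilde{\mathbf u}=\frac{1}{\sqrt n}(\tilde{\mathbf u}\cdot\nabla)\tilde{\mathbf u}$; substituting these, the limit equals exactly $\frac{1}{4n\sqrt n}(\tilde{\mathbf u}\cdot\nabla n)\tilde{\mathbf u}-\frac{1}{2\sqrt n}(\nabla\cdot\tilde{\mathbf u})\tilde{\mathbf u}-\frac{1}{\sqrt n}(\tilde{\mathbf u}\cdot\nabla)\tilde{\mathbf u}$, which is the asserted left-hand side.

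No step is genuinely hard; the two points that need care are (i) the stencil uses two different shift magnitudes — $\Delta r_k$ in the factor $u_k$ and $2\Delta r_k$ in the factor $\tilde{\mathbf u}$ — so that the chain rule produces the extra factor $2$ in $\Phi_k'(0)$, which is exactly what makes the normalization $1/(4\Delta r_k)$ (rather than $1/(2\Delta r_k)$) the consistent one; and (ii) making the $\Delta\mathbf r\to0$ limit precise as stencil consistency at a fixed physical point, which only requires that first difference quotients of the (smooth) fields converge to their derivatives. Note that, in contrast to the derivation of Eq.~\eqref{eq:modified_motion} itself, the continuity equation Eq.~\eqref{eq:conti} is not invoked anywhere in this argument.
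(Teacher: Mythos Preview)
Your proof is correct. Both your argument and the paper's are Taylor-expansion consistency checks for the stencil, but you organize the computation more efficiently: by first recognizing that $\frac{1}{\sqrt{n_{\mathbf j\pm\mathbf e_k}}}\tilde u_{k,\mathbf j\pm\mathbf e_k}=u_k(\mathbf r\pm\Delta r_k\mathbf e_k)$, you collapse a three-factor product to the two-factor function $\Phi_k(h)=u_k(\mathbf r+h\mathbf e_k)\,\tilde{\mathbf u}(\mathbf r+2h\mathbf e_k)$ and read off the limit from $\Phi_k'(0)$. The paper instead expands each of the three factors $1/\sqrt{n}$, $\tilde u_k$, and $\tilde{\mathbf u}$ separately to first order, multiplies everything out, and collects the surviving cross terms directly in the $(\tilde{\mathbf u},n)$ variables. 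Your route is shorter and makes transparent why the $1/(4\Delta r_k)$ normalization is the consistent one (your point (i)), at the mild cost of a back-substitution from $\mathbf u$ to $\tilde{\mathbf u}$ at the end; the paper's route avoids that substitution but requires more bookkeeping in the triple product. The underlying content is the same centered-difference consistency statement.
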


\begin{proof}
    See Appendix~\ref{appendix:lem-nabla_3D}.
\end{proof}

\noindent
Then, 
with the central difference and Lemma~\ref{lem:nabla_3D},
we can discretize Eqs.~\eqref{eq:modified_motion}, \eqref{eq:modified_ampere} and \eqref{eq:modified_farady} as follows:
\begin{widetext}
\begin{gather}
    \begin{split} \label{eq:target-u-discretize_3D}
        \frac{\partial \tilde{u}_{i,\mathbf{j}}}{\partial t}
        &=-
        \sum_{k=1}^3 
        \frac{1}{4 \Delta r_k}\left(\frac{1}{\sqrt{n_{\mathbf{j}+\mathbf{e}_k}}} 
        \tilde{u}_{k,\mathbf{j}
        + \mathbf{e}_k} \tilde{u}_{i,\mathbf{j}+2\mathbf{e}_k} 
        -\frac{1}{\sqrt{n_{\mathbf{j}-\mathbf{e}_k}}} 
        \tilde{u}_{k,\mathbf{j}-\mathbf{e}_k} \tilde{u}_{i,\mathbf{j}-2\mathbf{e}_k}
        \right)
        \\
        &\quad\quad+ \sqrt{\frac{n_{\mathbf{j}}}{\epsilon_0 m_q}} q \tilde{E}_{i,\mathbf{j}} 
        + \sqrt{\frac{\mu_0}{m_q}} q \sum_{k_2,k_3=1}^3\epsilon_{ik_2k_3} \tilde{u}_{k_2,\mathbf{j}} \tilde{B}_{k_3,\mathbf{j}},
    \end{split}  \\
    \begin{split} \label{eq:target-E-discretize_3D}
        \frac{\partial \tilde{E}_{i,\mathbf{j}}}{\partial t}
        &=\frac{1}{\sqrt{\epsilon_0 \mu_0}} 
        \sum_{k_2,k_3=1}^3 
        \epsilon_{ik_2k_3} \frac{1}{2 \Delta r_{k_2}}
        \left( 
        \tilde{B}_{k_3,\mathbf{j}+\mathbf{e}_{k_2}} 
        - \tilde{B}_{k_3,\mathbf{j}-\mathbf{e}_{k_2}}
        \right)
        - \sqrt{\frac{n_\mathbf{j}}{\epsilon_0 m_q}} q \tilde{u}_{i,\mathbf{j}} ,
    \end{split}  \\
    \begin{split} \label{eq:target-B-discretize_3D}
        \frac{\partial \tilde{B}_{i,\mathbf{j}}}{\partial t}
        &=-\frac{1}{\sqrt{\epsilon_0 \mu_0}} 
        \sum_{k_2,k_3=1}^3  
        \epsilon_{ik_2k_3} \frac{1}{2 \Delta r_{k_2}}
        \left(\tilde{E}_{k_3,\mathbf{j}+\mathbf{e}_{k_2}} 
        - \tilde{E}_{k_3,\mathbf{j}-\mathbf{e}_{k_2}}
        \right),
    \end{split} 
\end{gather}
\end{widetext}
where 
$i \in \{1,2,3\}$, 
$\mathbf{j}=(j_1,j_2,j_3) \in \{0, \dots, N_x-1\}^3$, and
the Levi-Civita symbol
\begin{equation}
    \epsilon_{k_1 k_2 k_3} =
    \begin{dcases}
        1 &\left( (k_1, k_2, k_3)=(1,2,3), (2, 3, 1), (3,1,2) \right) \\
        -1 &\left( (k_1, k_2, k_3)=(1,3,2), (2,1,3), (3,2,1) \right) \\
        0 &\left(\mathrm{otherwise} \right).
    \end{dcases}
\end{equation}

Next, to apply KvN linearization to Eqs.~\eqref{eq:target-u-discretize_3D}, \eqref{eq:target-E-discretize_3D}, and \eqref{eq:target-B-discretize_3D}, we confirm these equations satisfy the conditions in Definition~\ref{def:ODE-KvN}.
We represent the index sets $p=\left\{j_1, \dots, j_n \right\} \in \mathcal{S}$ and the coupling constants $\alpha_{p \rightarrow j} \ (j \in p)$ in Eq.~\eqref{eq:ODE-KvN}
as $\left\{ (\alpha_{p \to j_1}, x_{j_1}), \dots, (\alpha_{p \to j_n}, x_{j_n})  \right\}$.
Then, the index sets and the coupling constants of Eqs.~\eqref{eq:target-u-discretize_3D}, \eqref{eq:target-E-discretize_3D}, and \eqref{eq:target-B-discretize_3D} are as follows:
\begin{gather}
    \begin{split}
    & \left\{ \left(- \frac{1}{4 \Delta r_k\sqrt{n_{\mathbf{j}+\mathbf{e}_k}}}, \tilde{u}_{i,\mathbf{j}} \right), 
    \left(0, \tilde{u}_{k,\mathbf{j}+\mathbf{e}_k} \right),\right. \\ 
    &\hspace{3cm} \left. \left(\frac{1}{4 \Delta r_k\sqrt{n_{\mathbf{j}+\mathbf{e}_k}}}, \tilde{u}_{i, \mathbf{j}+2\mathbf{e}_k} \right)  \right\},
    \end{split}\\
    \left\{ \left(\sqrt{\frac{n_\mathbf{j}}{\epsilon_0 m_q}} q, \tilde{u}_{i, \mathbf{j}} \right), 
    \left(-\sqrt{\frac{n_\mathbf{j}}{\epsilon_0 m_q}} q, \tilde{E}_{i, \mathbf{j}} \right) \right\}
     , \\
    \left\{ \left(\sqrt{\frac{\mu_0}{m_q}} q, \tilde{u}_{1, \mathbf{j}} \right), 
    \left(-\sqrt{\frac{\mu_0}{m_q}} q, \tilde{u}_{2, \mathbf{j}} \right),
    \left(0, \tilde{B}_{3, \mathbf{j}} \right)  \right\}
     , \\
    \left\{ \left(-\sqrt{\frac{\mu_0}{m_q}} q, \tilde{u}_{1, \mathbf{j}} \right), 
    \left(\sqrt{\frac{\mu_0}{m_q}} q, \tilde{u}_{3, \mathbf{j}} \right),
    \left(0, \tilde{B}_{2, \mathbf{j}} \right)  \right\}
     , \\
    \left\{ \left(\sqrt{\frac{\mu_0}{m_q}} q, \tilde{u}_{2, \mathbf{j}} \right), 
    \left(-\sqrt{\frac{\mu_0}{m_q}} q, \tilde{u}_{3, \mathbf{j}} \right),
    \left(0, \tilde{B}_{1, \mathbf{j}} \right)  \right\}
     , \\
    \begin{split}
    &\left\{ \left(\frac{1}{\sqrt{\epsilon_0 \mu_0}} \frac{1}{2 \Delta r_{k_2}} \epsilon_{k_1 k_2 k_3}, \tilde{E}_{k_1, \mathbf{j}} \right), \right. \\
    &\hspace{1.5cm} \left.  \left(\frac{1}{\sqrt{\epsilon_0 \mu_0}} \frac{1}{2 \Delta r_{k_2}} \epsilon_{k_3 k_2 k_1}, \tilde{B}_{k_3,\mathbf{j}+\mathbf{e}_{k_2}} \right) \right\}, 
    \end{split}\\
    \begin{split}
    & \left\{ \left(-\frac{1}{\sqrt{\epsilon_0 \mu_0}} \frac{1}{2 \Delta r_{k_2}} \epsilon_{k_1 k_2 k_3}, \tilde{E}_{k_1, \mathbf{j}} \right), \right. \\
    &\hspace{1.5cm} \left.\left(-\frac{1}{\sqrt{\epsilon_0 \mu_0}} \frac{1}{2 \Delta r_{k_2}} \epsilon_{k_3 k_2 k_1}, \tilde{B}_{k_3,\mathbf{j}-\mathbf{e}_{k_2}} \right) \right\},
    \end{split}
\end{gather}
where $\mathbf{j} \in \{0, \dots, N_x-1\}^3, i,k,k_1,k_2,k_3 \in \{1, 2, 3 \}$.
These index sets and the coupling constants form Eqs.~\eqref{eq:target-u-discretize_3D}, \eqref{eq:target-E-discretize_3D} and \eqref{eq:target-B-discretize_3D} 
in the same way as Eq.~\eqref{eq:ODE-KvN}.
Furthermore, when $c=4, d=3$ in the 1D case, $c=8, d=3$ in the 2D case, and $c=14, d=3$ in the 3D case,
these index sets and the coupling constants satisfy the condition from 1 to 3 in Definition~\ref{def:ODE-KvN}.
Therefore, we can 
map Eqs.~\eqref{eq:target-u-discretize_3D}, \eqref{eq:target-E-discretize_3D} and \eqref{eq:target-B-discretize_3D} to a system of the ODEs defined in Definition~\ref{def:ODE-KvN}.

According to Ref.~\cite{Tanaka2024}, 
the mapped Hamiltonian
for a system of Eqs.~\eqref{eq:target-u-discretize_3D}, \eqref{eq:target-E-discretize_3D}, and \eqref{eq:target-B-discretize_3D}
is written as follows.
First, we rename the variables of Eqs.~\eqref{eq:target-u-discretize_3D}, \eqref{eq:target-E-discretize_3D}, and \eqref{eq:target-B-discretize_3D} to assign indices compatible with Eq.~\eqref{eq:ODE-KvN}.
\vspace{2mm}
\begin{itemize}
    \item 1D ElectroHydroDynamics
\end{itemize}
For $j \in \{0, \dots, 2N_x-1\}$ and $i=1$, 
let
\begin{equation}
    x_j \coloneqq
    \begin{dcases}
        \tilde{u}_{i,j} &(j=j_1), \\
        \tilde{E}_{i,j} &(j=N_x+j_1),
    \end{dcases}\label{eq:parameters_1D}
\end{equation}
and $N = 2N_x$.

\begin{itemize}
    \item 2D ElectroHydroDynamics
\end{itemize}
For $j \in \{0, \dots, 5N_x^2-1\}$ and $i \in \{1, 2\}$, 
let
\begin{equation}
    x_j \coloneqq
    \begin{dcases}
        \tilde{u}_{i,\mathbf{j}} &(j=(i-1)N_x^2 + j_1 N_x+j_2), \\
        \tilde{E}_{i,\mathbf{j}} &(j=2N_x^2+(i-1)N_x^2 +j_1 N_x+j_2), \\
        \tilde{B}_{3,\mathbf{j}} &(j=4N_x^2 +j_1 N_x+j_2),
    \end{dcases}\label{eq:parameters_2D}
\end{equation}
and $N = 5N_x^2$.

\begin{itemize}
    \item 3D ElectroHydroDynamics
\end{itemize}
For $j \in \{0, \dots, 9N_x^3-1\}$ and $i \in \{1,2,3\}$, 
let
\begin{equation}
    x_j \coloneqq
    \begin{dcases}
        \tilde{u}_{i,\mathbf{j}} &(j=(i-1)N_x^3 + j_1 N_x^2+j_2N_x+j_3), \\
        \tilde{E}_{i,\mathbf{j}} &(j=3N_x^3+(i-1)N_x^3 + j_1 N_x^2+j_2N_x+j_3), \\
        \tilde{B}_{i,\mathbf{j}} &(j=6N_x^3+(i-1)N_x^3 + j_1N_x^2+j_2N_x+j_3),
    \end{dcases}
\end{equation}
and $N = 9N_x^3$.
Then, the Hamiltonian $\hat{H}$ for the target equations is denoted as
\begin{equation} \label{eq:def_hamiltonian}
    \hat{H} \coloneqq
    \sum_{j=0}^{N-1} \sum_{p \in \mathcal{S} : j \in p} 
    \frac{\alpha_{p \rightarrow j}}{\Lambda^{\lvert p \rvert -2}} \hat{k}_j
    \prod_{l \in p \setminus \{j\}} \hat{x}_l,
\end{equation}
where $\Lambda (\ge 1)$ is a rescale parameter that reduces the nonlinearity of the given differential equations as its value increases.
Note that we have to truncate the Hamiltonian $\hat{H}$ in Eq.~\eqref{eq:def_hamiltonian} to execute the Hamiltonian simulation for a system of Eqs.~\eqref{eq:target-u-discretize_3D}, \eqref{eq:target-E-discretize_3D}, and \eqref{eq:target-B-discretize_3D} on quantum computers.
We denote $\hat{H}_m$ as the truncated Hamiltonian by the KvN truncation order $m$.

\subsection{The time and space complexity}

Next, we describe the time and space complexity 
of simulating the Schr\"{o}dinger equation for a system of Eqs.~\eqref{eq:target-u-discretize_3D}, \eqref{eq:target-E-discretize_3D}, and \eqref{eq:target-B-discretize_3D}.
We first analyze the complexities of the proposed quantum algorithm.
To realize the algorithm, 
we implement the time evolution operator $e^{-i \hat{H}_m T}$,
where $\hat{H}_m$ is the truncation of the Hamiltonian in Eq.~\eqref{eq:def_hamiltonian}.
The truncation procedure explained in Sec.~\ref{subsec:KvN-linearization} requires the relabeling of the indices of the state $\otimes_{j=0}^{N-1} \ket{n_j}$ to suppress the number of using qubits.
In this paper, 
we utilize the relabeling described in Appendix~\ref{appendix:relabeling}. 
To implement this time evolution operator $e^{-i \hat{H}_m T}$ in the quantum circuits, 
we construct the block encoding $U$ of the truncated Hamiltonian $\hat{H}_m$, 
as described in Sec.~\ref{subsec:implement-hamiltonian-simulation}.
This block encoding can be implemented by the method explained in Appendix~\ref{appendix:block-encoding-hamiltonian}.
According to Theorem 3 in Ref.~\cite{Tanaka2024}, 
we can use this block encoding $U$ to construct the time evolution operator $e^{-i \hat{H}_m T}$, 
as shown in Lemma~\ref{lem:HS-complexity}.

\begin{lemma} \label{lem:HS-complexity}
    For a given $(8 c m, \lceil m \log N \rceil+3, 0)$- block encoding $U$ of 
    $\tilde{H}=\hat{H}_m / \left( 3 \eta (m/2)^{3/2} \right)$ in Eq.~\eqref{eq:def_hamiltonian},
    where $c$ is defined in  Definition \ref{def:ODE-KvN} and
    $\eta = \max_{j \in p, p \in \mathcal{S}} \lvert \alpha_{p \to j} / \Lambda^{\lvert p \rvert -2} \rvert$, 
    we can implement $(1, \lceil m \log N \rceil+5, 6 \epsilon)$- block encoding of $e^{-i \hat{H}_m T}$, 
    with $O\left(\eta c m^{5/2} T + \log (1/\epsilon) \right)$ uses of $U$ or its inverse,
    3 uses of controlled-$U$ or its inverse and with $O\left( m (\log N) \left(\eta c m^{5/2} T + \log (1/\epsilon) \right)  \right)$ two-qubit gates and 
    using $O(1)$ ancilla qubits.
\end{lemma}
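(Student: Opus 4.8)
\noindent\emph{Proof plan.} The plan is to reduce the statement to the QSVT Hamiltonian-simulation construction of Sec.~\ref{subsec:implement-hamiltonian-simulation} — equivalently, to invoke Theorem~3 of Ref.~\cite{Tanaka2024} directly — after absorbing the rescaling $\tilde H = \hat H_m/\bigl(3\eta (m/2)^{3/2}\bigr)$ into the evolution time. First I would set $T' := 3\eta (m/2)^{3/2}\,T$, so that $e^{-i\hat H_m T} = e^{-i\tilde H T'}$ and it suffices to block-encode $e^{-i\tilde H T'}$. Since the given $(8cm,\lceil m\log N\rceil+3,0)$-block encoding $U$ of $\tilde H$ is \emph{exact}, it is in particular an $(8cm,\lceil m\log N\rceil+3,\epsilon/|2T'|)$-block encoding of $\tilde H$, so the construction of Sec.~\ref{subsec:implement-hamiltonian-simulation} applies verbatim with $\alpha = 8cm$, $a = \lceil m\log N\rceil+3$, and evolution time $T'$: it produces a $(1,a+2,\epsilon)$-block encoding of $\tfrac12 e^{-i\tilde H T'}$, whose prefactor $\tfrac12$ is then removed by OAA.

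Next I would fix the polynomial truncation degree $R$. The Jacobi--Anger parameter governing the error in Eq.~\eqref{eq:epsilon_QSVT} is $\tau = \alpha T' = 8cm\cdot 3\eta (m/2)^{3/2}T = 6\sqrt2\,c\,\eta\,m^{5/2}\,T = O\!\bigl(c\eta m^{5/2}T\bigr)$, where I used $m\,(m/2)^{3/2} = m^{5/2}/2^{3/2}$. Inverting the Bessel-tail bounds of Eq.~\eqref{eq:epsilon_QSVT} (Lemma~57 / Corollary~60 of Ref.~\cite{Gilyen2019}) — the standard Jacobi--Anger truncation estimate — choosing $R = O\!\bigl(\tau + \log(1/\epsilon)\bigr) = O\!\bigl(c\eta m^{5/2}T + \log(1/\epsilon)\bigr)$ makes each of the two truncation-error terms at most $\epsilon$. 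Combining $P_R^{\cos}(\tilde H T')$ and $P_R^{\sin}(\tilde H T')$ by an LCU and applying the robust OAA analysis of Theorem~3 of Ref.~\cite{Tanaka2024}, the error grows by at most a constant factor, giving a $(1,\,\lceil m\log N\rceil+5,\,6\epsilon)$-block encoding of $e^{-i\tilde H T'} = e^{-i\hat H_m T}$. By the query count recalled in Sec.~\ref{subsec:implement-hamiltonian-simulation}, OAA makes $3(2R+1) = O(R) = O\!\bigl(\eta c m^{5/2}T + \log(1/\epsilon)\bigr)$ calls to $U$ or $U^{\dagger}$, together with the $3$ controlled-$U$ (or inverse) calls of the OAA reflection.

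Finally I would tally the ancillas and the two-qubit gate count. Beyond the $\lceil m\log N\rceil+3$ ancillas internal to $U$, the construction needs one extra qubit for the QSVT signal-processing rotation, one for the LCU combination of $P_R^{\cos}$ and $P_R^{\sin}$, and $O(1)$ more for OAA, i.e., $O(1)$ additional ancilla qubits — consistent with the output ancilla parameter $a+2 = \lceil m\log N\rceil+5$. Each application of $U$ in the QSVT/OAA sequence is interleaved with single-qubit $Z$-rotations on the signal qubit and one reflection about $\ket{0}^{\otimes(\lceil m\log N\rceil+3)}$; the latter costs $O(m\log N)$ two-qubit gates via a CNOT/Toffoli cascade. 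Multiplying by the $O(R)$ queries yields $O\!\bigl(m(\log N)\,(\eta c m^{5/2}T + \log(1/\epsilon))\bigr)$ two-qubit gates, as claimed.

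The only step beyond bookkeeping is the inversion of Eq.~\eqref{eq:epsilon_QSVT} to obtain $R = O(\tau + \log(1/\epsilon))$, which is the classical Jacobi--Anger truncation estimate; the genuine care is to track the two rescalings $\hat H_m \mapsto \tilde H$ and $T \mapsto T'$ consistently so that the product $\alpha T'$ collapses to $O(c\eta m^{5/2}T)$, and to keep the constant in the final error ``$6\epsilon$'' straight when composing the $P^{\cos}$ and $P^{\sin}$ truncation errors with robust OAA. Everything else is a direct invocation of Theorem~3 of Ref.~\cite{Tanaka2024}.
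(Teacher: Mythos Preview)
Your proposal is correct and takes essentially the same approach as the paper: the paper's own proof simply observes that $d=3$ for this system and invokes Theorem~3 of Ref.~\cite{Tanaka2024} in one line, while you spell out the internals of that invocation (the rescaling $T\mapsto T'$, the Jacobi--Anger truncation $R=O(\tau+\log(1/\epsilon))$, and the ancilla/gate bookkeeping). The only point the paper makes explicit that you leave implicit is the identification $d=3$, but this is already baked into the constants $8cm$ and $3\eta(m/2)^{3/2}$ appearing in the lemma statement, so nothing is missing.
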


\begin{proof}
    We can take $d$ in Definition~\ref{def:ODE-KvN} as $d=3$.
    From Theorem 3 in Ref.~\cite{Tanaka2024}, the statement follows.
\end{proof}
\noindent

\noindent
Note that in the 1D case, we can set $c=4$; 
in the 2D case, we can set $c=8$; 
and in the 3D case, we can set $c=14$.
Also, remember that $\Delta r_1, \Delta r_2, \Delta r_3=O(1/N_x)$ and $\Lambda \ge 1$,
then $\eta$ becomes $\eta = O\left( N_x \right)$.
From Lemma~\ref{lem:HS-complexity},
we can implement $(1, \lceil m \log N \rceil+5, 6 \epsilon)$- block encoding of $e^{-i \hat{H}_m T}$ 
with $O\left(m^{5/2} N_x T + \log (1/\epsilon) \right)$ uses of $U$ or its inverse,
3 uses of controlled-$U$ or its inverse, $O\left( m s (\log N_x) \left( m^{5/2} N_x T + \log (1/\epsilon) \right)  \right)$ two-qubit gates and 
$O(1)$ ancilla qubits.
Here, the spatial dimension is denoted by $s$.
Referring to Sec.\ref{subsec:KvN-linearization},
this Hamiltonian simulation requires $O\left( m \log N \right)=O\left( ms \log N_x \right)$ qubits.

Next, we analyze the computational complexity of the finite volume method commonly used for classical electromagnetic fluid dynamics. 
A classical implementation requires $O(sN_x^s)$ memories to store the discretized vectors in the uniform grids.
We consider a classical simulation using the 1st-order upwind scheme in the spatial grids to avoid numerical oscillations and the 4th-order Runge-Kutta (RK4) scheme in time evolution. 
Determining the upwind direction requires a constant number of operations per grid point, leading to $O(1)$ operations per spatial derivative evaluation. 
The RK4 requires four evaluations of the spatial derivative per time step. 
To achieve a simulation time $T$ within an additive error $\epsilon$, the number of time evolutions for a $4$-th order method is typically $O(T^{5/4} \epsilon^{-1/4})$. 
In addition, the Courant-Friedrichs-Lewy~(CFL) condition requires $T/N_t=O(\Delta x)$, where $N_t$ is the number of time steps and $\Delta x \propto 1/N_x$. 
It suffices that $N_t = \Omega(TN_x)$.
Combining these considerations, the classical simulation up to time $T$ within the additive error $\epsilon$ requires $O(sN_x^s (T^{5/4}\epsilon^{-1/4}+TN_x))$ arithmetic operations.

Compared with the time and space complexity of the classical method,
those of the proposed quantum algorithm are reduced.
To analyze the complexities,
we assume $m=\mathrm{polylog}(N_x)$.
According to Ref.~\cite{Tanaka2024},
this assumption is valid to realize the target error $\epsilon$.
For time complexity,
our algorithm improves from $O\left(s N_x^{s}(T^{5/4}\epsilon^{-1/4}+T N_x) \right)$ to 
$O \left(s \, \mathrm{polylog} \left( N_x \right) \left( N_x T + \log \left(1/\epsilon \right) \right) \right)$,
which implies a polynomial speedup with respect to $N_x$
and an exponential speedup with respect to $s$.
Regarding space complexity,
the algorithm reduces from $O(sN_x^s)$ to $O\left( s \, \mathrm{polylog} \left( N_x \right) \right)$,
which means an exponential spatial advantage.

\subsection{Scaling of total error and numerical stability conditions}
To delineate the stability conditions of the proposed scheme, 
we analyze the combined error stemming from both the KvN linearization and the QSVT Hamiltonian simulation.  
As the total error approaches zero, 
the solution of the underlying unitary system becomes stable;
accordingly, we show the corresponding stability criterion.  
This criterion serves as a practical formula for choosing the truncation order and other key simulation parameters.

Let $\hat{H}_d$ be the KvN-linearized Hamiltonian, $\hat{H}_m$ be the KvN-truncated Hamiltonian, and $U_{\text{HS}}(:=2\langle \text{a}|_{0,1,2} U_{\text{exp}} |\text{a}\rangle_{0,1,2})$ be the QSVT-based Hamiltonian simulation operation based on Eq.~\eqref{eq:QSVT_expH}.
Applying the triangle inequality, we obtain
\begin{align}
    &\left|\langle c|\exp(-i\hat{H}_dT)|\bm{x}(0)\rangle -\langle c|U_{\text{HS}}|\bm{x}_m(0)\rangle \right| \no\\
    &\le \left|\langle c|\exp(-i\hat{H}_dT)|\bm{x}(0)\rangle-\langle c|\exp(-i\hat{H}_{m}T)|\bm{x}_m(0)\rangle\right| \no\\
    &\quad+\left| \langle c|\exp(-i\hat{H}_mT)|\bm{x}_m(0)\rangle-\langle c|U_{\text{HS}}|\bm{x}_m(0)\rangle\right|\\
    &\leq \left|\langle c|\exp(-i\hat{H}_dT)|\bm{x}(0)\rangle-\langle c|\exp(-i\hat{H}_{m}T)|\bm{x}_m(0)\rangle\right|\no\\
    &\quad+\|\exp(-i\hat{H}_mT)-U_{\text{HS}}\|,\label{eq:errors_(2)and(3)}
\end{align}
where $\langle c|$ is an arbitrary unit vector whose total particle number is 1 and $T$ is a simulation time.
In Eq.~\eqref{eq:errors_(2)and(3)}, 
the first term represents the KvN truncation error, 
while the second term corresponds to the QSVT-based Hamiltonian simulation error.

According to Ref.~\cite{Tanaka2024}, the KvN truncation error can be expressed as
\begin{align}
    &\quad \left|\langle c|\exp(-i\hat{H}_dT)|\bm{x}(0)\rangle-\langle c|\exp(-i\hat{H}_{m}T)|\bm{x}_m(0)\rangle\right|\no\\
    &\leq O\left(\left(\frac{CN_x}{\Lambda}\right)^{\lceil (m-1)/3 \rceil}\right)
    + \frac{2}{6^{\lceil (m-1)/3 \rceil} \cdot \left(\lceil (m-1)/3 \rceil ! \right)},
    \label{eq:error_KvN}
\end{align}
where $C$ is a constant that satisfies $\max_{j \in p, p \in \mathcal{S}} \lvert \alpha_{p \to j} \rvert \le C N_x$.
To hold Eq.~\eqref{eq:error_KvN}, 
we assume the rescale parameter $\Lambda$ meets $\Lambda \geq 2^{-1/2} \cdot 3^3 (CN_x)^2 c T m^3$.
According to Eq.~\eqref{eq:error_KvN}, the KvN‑linearization error converges to $0$ 
as \(m\) becomes sufficiently large only when the conditions \(\Lambda > CN_x\) and 
$\Lambda \geq 2^{-1/2} \cdot 3^3 (CN_x)^2 c T m^3$ are satisfied.

By substituting 
$\alpha = 2^{3/2} \cdot 3  C N_x c m^{5/2}$ into Eq.~\eqref{eq:epsilon_QSVT}, 
the QSVT‑based Hamiltonian simulation error can be written as
\begin{align}
&\left\lVert \exp(-i\hat{H}_mT)-U_{\text{HS}} \right\rVert \no\\
&\quad\quad\leq\frac{5}{4}\left(\frac{e|2^{3/2} \cdot 3 C N_x c m^{5/2} T|}{4( R+1)}\right)^{( 2R+2)}\quad& \no\\
&\quad\quad\quad\quad+\frac{5}{4}\left(\frac{e|2^{3/2} \cdot 3 C N_x c m^{5/2} T |}{2( 2R+3)}\right)^{( 2R+3)}. \label{eq:error_QSVT}
\end{align}
For sufficiently large \(R\), the QSVT‑based Hamiltonian‑simulation error remains bounded if the following stability condition is satisfied:
\begin{align}
\frac{ 3 eCN_xcm^{5/2}T}{2^{1/2}(R+1)}<1 .
\end{align}
By combining Eqs.~\eqref{eq:errors_(2)and(3)}, \eqref{eq:error_KvN}, and \eqref{eq:error_QSVT}, we obtain a unified estimate for the total error arising from KvN linearization and Hamiltonian simulation.  
Requiring this total error to vanish leads to the following stability condition:
\begin{align}
N_xT < \min{\left( \frac{2^{1/2}\Lambda}{3^3 C^2N_x c m^3}, 
\frac{2^{1/2}(R+1)}{3 eC cm^{5/2} } \right)},\label{eq:stability_KvN_QSVT}
\end{align}
which plays the role of a quantum analogue to the classical CFL condition~\cite{Courant1928}.  
This is because Eq.~\eqref{eq:stability_KvN_QSVT} corresponds to an upper limit for $N_xT$ similar to the Courant number.
Consequently, we should first choose $m$ to suppress the error in Eq.~\eqref{eq:error_KvN}, and then, based on this $m$, adjust $\Lambda$ or $R$ to satisfy Eq.~\eqref{eq:stability_KvN_QSVT}.

\section{Numerical Results}\label{sec:Results}
In this section, we numerically validate the quantum algorithm proposed for the electromagnetic fluid dynamics.  
The specific objectives of each test case are:
\begin{description}
  \item[Case (A)] To validate the proposed method in the linear solution, we compare its solution with the analytical result of 1D plasma oscillations.  
  \item[Case (B)] To investigate the behavior of the nonlinear numerical solution as the KvN truncation order $m$ varies, we perform 1D advection tests with different values of $m$.  
  \item[Case (C)] To demonstrate error convergence with increasing numbers of grid points $N_x$, we perform 1D advection tests by varying $N_x$. 
  \item[Case (D)] Reproduce the Kelvin–Helmholtz instability as a representative practical nonlinear problem.
\end{description}
In this study, we assume a sparse oracle for the Hamiltonian and directly implement the Hamiltonian matrix using the \texttt{SparseMatrix} function in Qulacs.
Then we implement our proposed quantum algorithm, denoted \textsf{KvN-QSVT}, using the quantum circuit emulator Qulacs~\cite{Suzuki2021}, following the approach in Ref.~\cite{Toyoizumi2024}.
For comparison, we also compute a reference solution, denoted \textsf{KvN‑expm}, by evaluating \(\exp(-i\hat{H}_{m}T)\) through direct diagonalization.
This results are available at Ref.~\cite{HiguchiItoCode2025}.

\subsection{1D linear electron plasma oscillation test}
To assess the accuracy of the linear solution produced by the proposed algorithm, 
we examine the 1D linear electron plasma oscillation, which is the most fundamental oscillatory mode in a 1D electromagnetic fluid system and one for which an exact solution exists. 
The objective of this test is to evaluate the accuracy of the proposed method by comparing its predicted oscillation frequency with the exact one.

The initial condition in Eq.~\eqref{eq:parameters_1D} is set as
\begin{align}
    u(x,t=0)=1, \quad E(x,t=0)=0,
\end{align}
where x-space range~$[-N_x/2,N_x/2]$, the number of spatial grids $N_x=8$, the spatial grid size~$\Delta x=1$, the number of parameters~$N=16$.
Then we set time steps~$N_t=200$, 
time parameter as a time step~$\tau=1$, 
plasma frequency~$\omega_{p.e}:=\sqrt{\frac{n_e}{\epsilon_0 m_e}} q_e=-1$, 
the rescale parameter $\Lambda=10^4$, 
degree of the polynomial approximation~$R=5$ and the KvN truncation order $m=1$.
The $L^2$-norm of the Hamiltonian is \( \alpha = 4.00 \), resulting in a modified real-time step of \( \Delta t = \tau / \alpha = 0.25 \). 

In the Case A, 
from Eqs.~\eqref{eq:quantum_state} and \eqref{eq:parameters_1D},
the implemented quantum state can be written as follows:
\begin{widetext}
\begin{align}
    \ket{\bm{x}(0)} &\coloneqq    
    \sum_{\substack{n_0, n_1, \dots, n_{N-1} = 0 \\ n_0 + n_1 + \dots + n_{N-1} \le m}}^m
    \prod_{j=0}^{N-1} 
    \left( w \left( \Lambda x_j(0) \right) H_{n_j} \left( \Lambda x_j(0) \right) \right) 
    \ket{n_0 n_1 \dots n_{N-1}}\\ 
    &= \bm{w} \left( \Lambda \bm{x}(0) \right) 
    \left(\lambda_{0}\ket{00\cdots 00} 
    + C_{1} \left( \tilde{u}_0\ket{10\cdots 00}
    + \hdots 
    + \tilde{E}_{N_x-1}\ket{00\cdots 01} \right) 
    + (\text{higher particle number terms}) \right), 
\end{align}
\end{widetext}
\noindent
where $\bm{w} ( \bm{x} ) \coloneqq \prod_{j=1}^N w\left( x_j \right)$,
$C_{1} \coloneqq \sqrt{2}\Lambda \pi^{N/4} /\|\mathbf{x}(0)\| $, and $\lambda_{0}$ are unphysical values with dimensional extension due to nonlinearity.

Figure~\ref{fig:u_E_1D_linear_QSVT_expm} shows the time evolution of $\tilde{u}$ and $\tilde{E}$ at $x=0$ obtained using \textsf{KvN‑QSVT} and \textsf{KvN‑expm}. 
Since the analytical solution of linear plasma oscillations is known to be a simple harmonic oscillation, the numerical results clearly exhibit the expected oscillatory behavior.
From the numerical results, the plasma oscillation period is determined as \( T_{\text{Numerical}} = 6.25 \). 
On the other hand, the exact plasma oscillation period obtained from linear analysis is \( T_{\text{Exact}} = 2\pi/\omega_{p,e}=6.28 \). 
The relative error between the numerical solution and the exact solution is $<10^{-2}$ over $7.96$ oscillation periods.
This result demonstrates that the numerical method reproduces the linear solution with high accuracy.
\begin{figure}
\begin{center}
\includegraphics[scale=0.3]{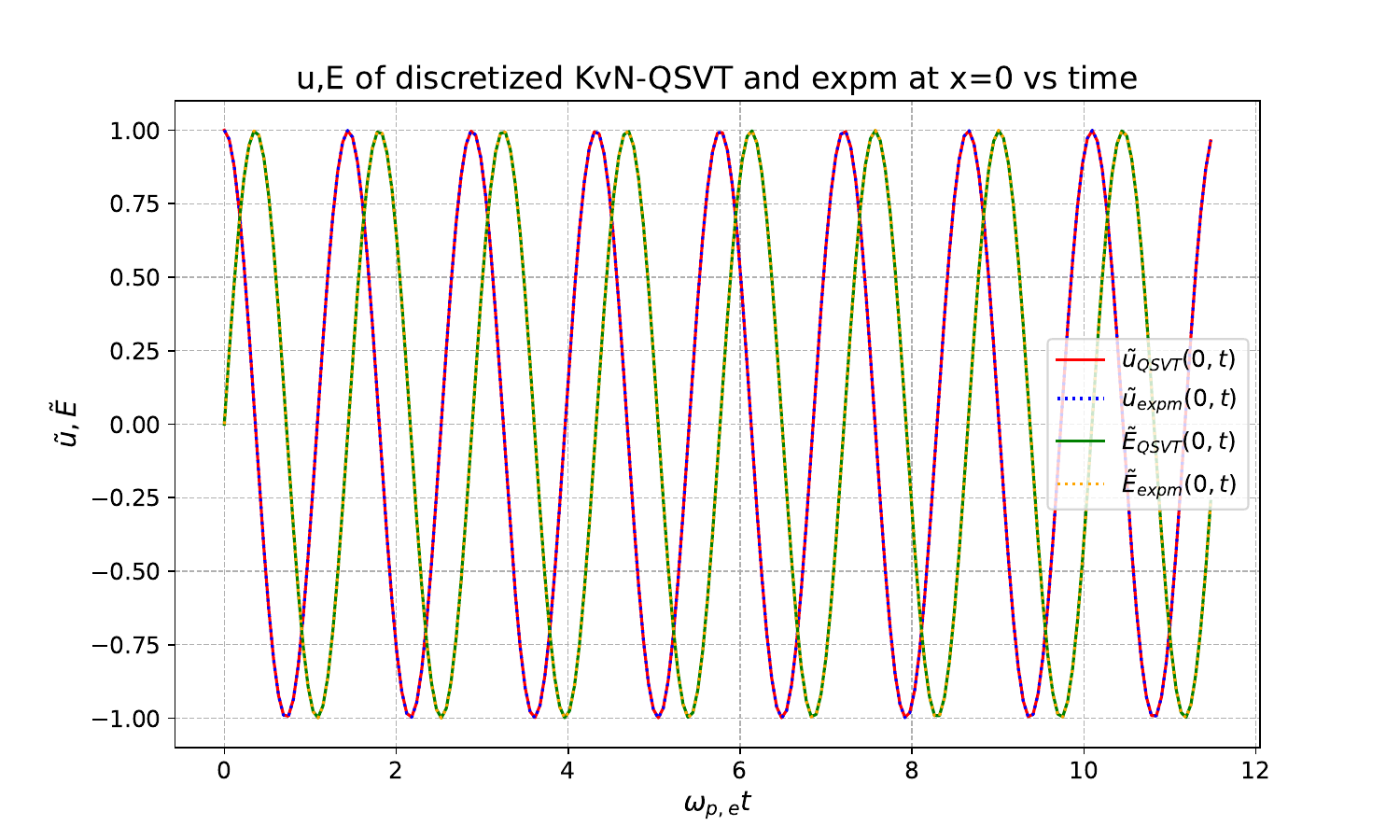}
\caption{Time evolution of the fluid velocity~$\tilde{u}$ and electric field~$\tilde{E}$ at a point of $x=0$. The solid lines are the discretized \textsf{KvN‑QSVT} result and the dot lines are the discretized \textsf{KvN‑expm}. Time scales with plasma frequency~$\omega_{p.e}$}
\label{fig:u_E_1D_linear_QSVT_expm}
\end{center}
\end{figure}

\subsection{1D nonlinear advection test with various KvN truncation order~$m$}
To investigate how the nonlinear solution depends on the KvN truncation order \(m\), 
we perform a 1D advection test.  
As an accuracy metric, 
we exploit the fact that the 1D advection equation conserves the $L^2$ norm of the $\mathbf{x}$, \(\|\mathbf{x}\|_{L^2}\) since the law of conservation of energy of the system Eq.~\eqref{eq:energy} holds. 
The goal is therefore to assess how closely the numerical solution preserves this invariant;
specifically, we monitor the quantity $\Delta(t)\;=\;\left|\bigl\|\mathbf{x}(t)\bigr\|_{L^2}\;-\;\bigl\|\mathbf{x}(0)\bigr\|_{L^2}\right|$,
and verify that  \(\Delta(t)\) approaches zero as the KvN truncation order $m$ increases, thereby indicating improved preservation of the invariant.
For different values of $m$, we quantify the truncation‑induced error in the proposed scheme.

The initial condition in Eq.~\eqref{eq:parameters_1D} is set as
\begin{align}
    u(x,t=0)=\sin(kx), \quad E(x,t=0)=0,
\end{align}
where x-space range~$[-N_x/2,N_x/2]$, the number of spatial grids~$N_x=8$, the spatial grid size~$\Delta x=1$, the initial wave number~$k=-2\pi/N_x$, the number of parameters~$N=16$.
 Then we set the plasma frequency~$\sqrt{\frac{n_e}{\epsilon_0 m_e}} q_e=-0.1$, 
the rescale parameter $\Lambda=1$, 
degree of the polynomial approximation~$R=5$,
time parameter as a time step~$\tau=1$, 
and time steps~$N_t=210,710,\text{ or }2056$ corresponding to various the KvN truncation order~$m=2,3,\text{ or }4$.
Real time range~$[0,100]$ is the same in all $m$ cases.

For several KvN truncation orders~\(m\),
Fig.~\ref{fig:L2norm_1D_nonlinear_deviation_error_various_m} shows the time evolution of the global $L^2$-norm error~\(\Delta(t)\) obtained with the \textsf{KvN-QSVT} scheme and with the reference \textsf{KvN-expm} solution. 
The \textsf{KvN-expm} results approach the norm of the initial solution as $m$ increases, which is the desirable behavior, whereas the \textsf{KvN-QSVT} curves display no such trend.
This discrepancy is attributed to the low-order polynomial approximation error introduced by QSVT.  
Since KvN linearization lifts the nonlinear system into a higher‑dimensional Hilbert space, 
the QSVT error acts on both the physical subspace and the augmented (Hilbert) subspace; 
when the influence of nonlinear terms is present, 
these two error contributions interact in a complicated manner, breaking norm conservation.  
Hence, to improve the accuracy of the proposed method, it is essential to increase not only \(m\) but also the truncation index of the QSVT polynomial approximation \(R\).

\begin{figure}
    \centering
    \includegraphics[scale=0.3]{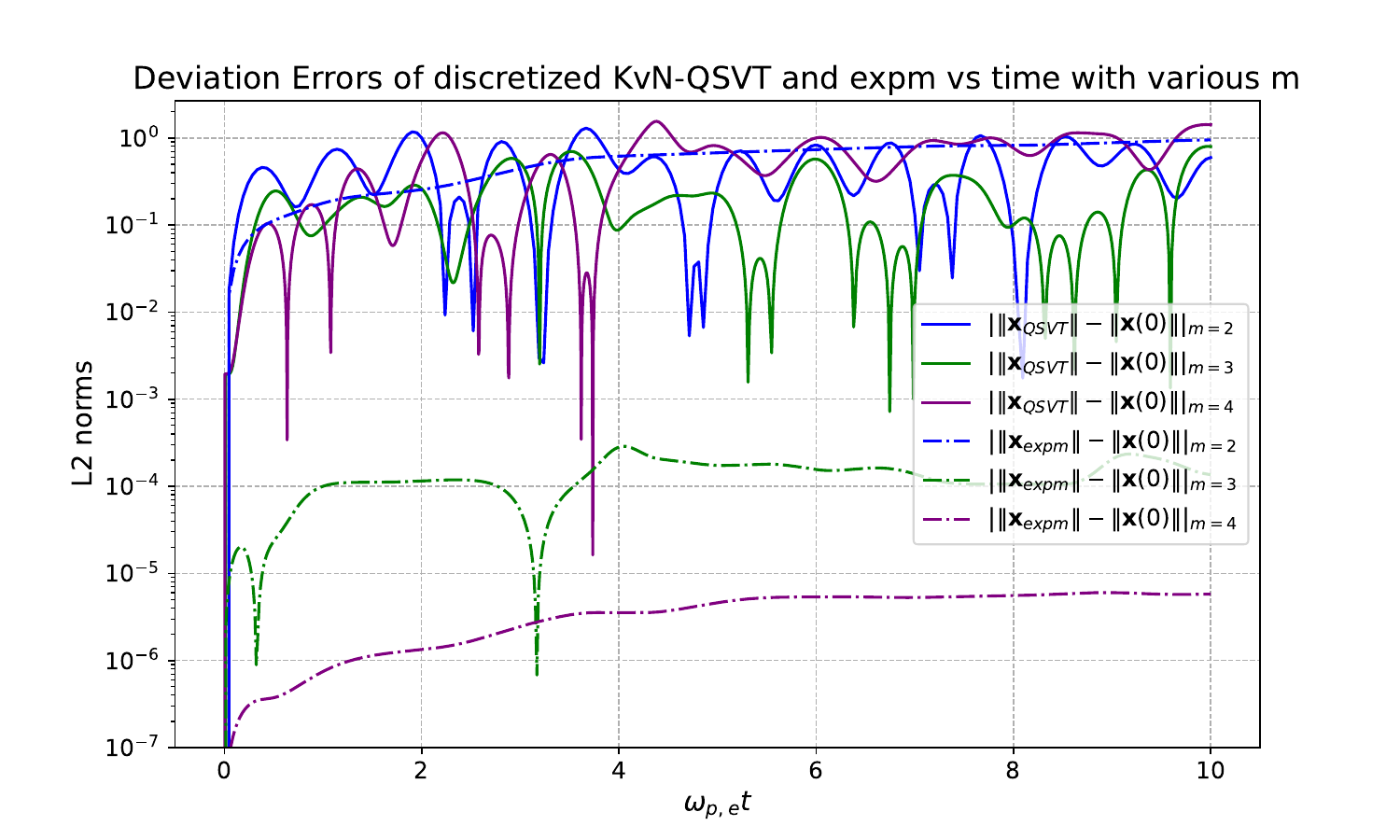}
    \caption{Time evolution of the deviation errors between the $L^2$ norm of variable vector $\mathbf{x}$ and the initial $L^2$ norm by the \textsf{KvN‑QSVT} and \textsf{KvN‑expm} in the Case B.}
\label{fig:L2norm_1D_nonlinear_deviation_error_various_m}
\end{figure}

\subsection{1D nonlinear advection test with various number of grids~$N_x$}
To verify that the proposed scheme attains higher accuracy as the grid resolution is increased, we perform a 1D advection test on a sequence of grids, where the resolution is controlled by the number of grid points \(N_x\). 
Demonstrating scalability with respect to \(N_x\) is a primary motivation for employing a quantum algorithm. 
The accuracy metric is the \(L^2\)-norm deviation 
$\Delta(t)\;=\; \left\lvert \bigl\|\mathbf{x}(t)\bigr\|_{L^2}\;-\;\bigl\|\mathbf{x}(0)\bigr\|_{L^2} \right\rvert$.
The goal is to show that \(\Delta(t)\to 0\) as \(N_x\) increases, as in Case B, 
so that the system satisfies the law of conservation of energy Eq.~\eqref{eq:energy}.

The initial condition in Eq.~\eqref{eq:parameters_1D} is set as
\begin{align}
    u(x,t=0)=\sin(kx), \quad E(x,t=0)=0,
\end{align}
where x-space range~$[-N_x/2,N_x/2]$, the various number of spatial grids~$N_x=11,22,33,\text{ or }44$, the spatial grid size~$\Delta x=1$, the initial wave number~$k=-2\pi/N_x$, the number of parameters~$N=2N_x$.
Then we set the plasma frequency~$\sqrt{\frac{n_e}{\epsilon_0 m_e}} q_e=-0.1$, 
$\Lambda=1$, 
degree of the polynomial approximation~$R=5$,
time parameter as a time step~$\tau=1$, 
the KvN truncation order~$m=2$, 
and time steps~$N_t=142,262,381, \text{ or }500$ corresponding to various $N_x=11,22,33,\text{ or }44$.
So, real time range~$[0,54.1]$ is the same in all $N_x$ cases.

Figure~\ref{fig:L2norm_1D_nonlinear_QSVT_expm_various_Nx} shows the temporal time evolution of the global $L^2$-norm error $\Delta(t)$ for several grid resolutions $N_x$, obtained using the \textsf{KvN-QSVT} scheme and compared with the reference \textsf{KvN-expm} solution.
The \textsf{KvN-expm} curves exhibit near-monotonic convergence as $N_x$ is increased, which is clearly visible in the figure. 
Although the \textsf{KvN-QSVT} results tend toward zero more slowly, their \(\Delta(t)\) values also trend toward zero, indicating improving accuracy.  
For both schemes, the improvement with increasing \(N_x\) arises because the discretization error decreases as \(O(1/N_x)\) according to Eq.~\eqref{eq:discretize_ududx}.  
This implies that \textsf{KvN-QSVT} benefits from grid refinement is a key property that guarantees the scalability of the proposed method, and this test confirms that behaviour.

\begin{figure}
    \centering
        \includegraphics[scale=0.3]{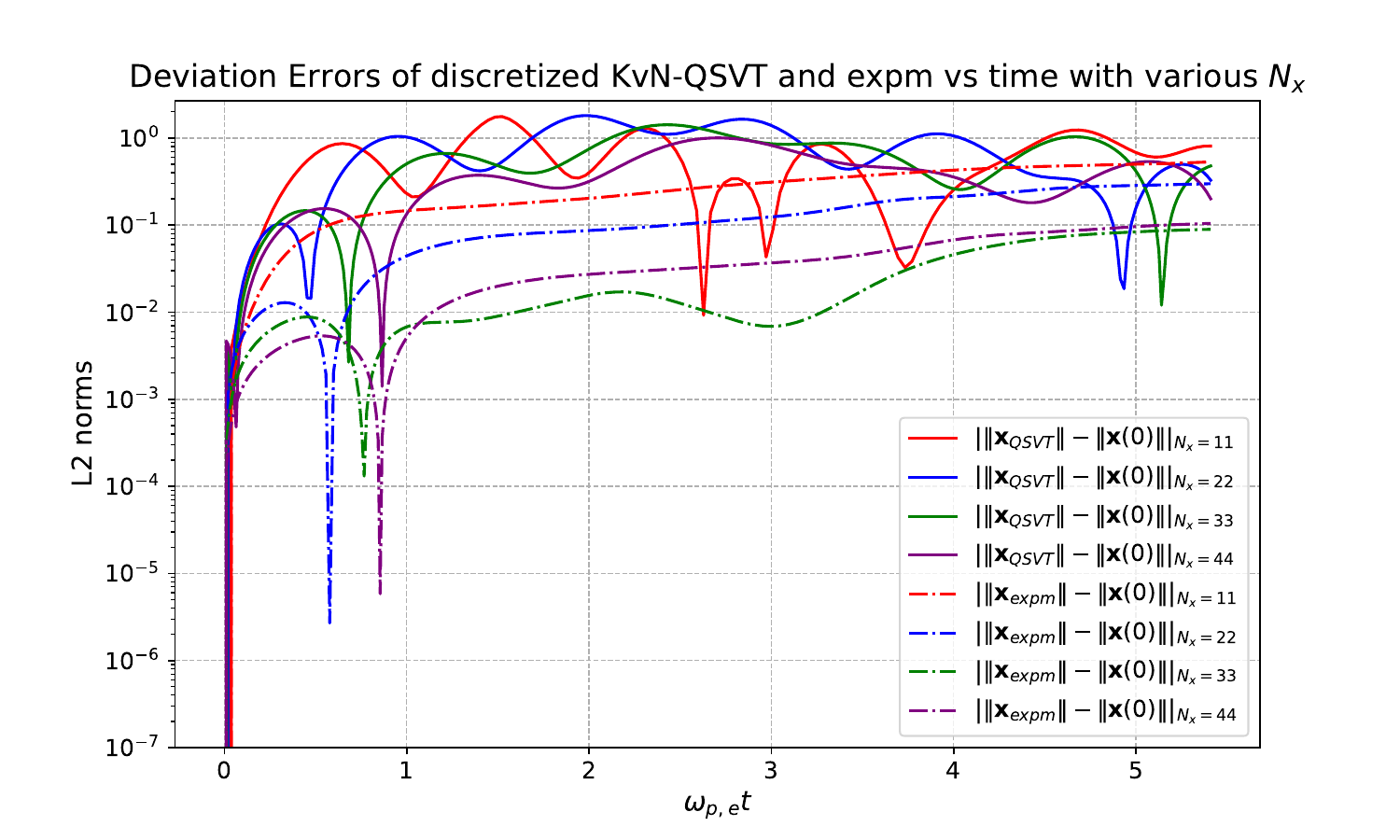}
    \caption{Time evolution of the deviation errors between the $L^2$ norm of variable vector $\mathbf{x}$ and the initial $L^2$ norm by the \textsf{KvN‑QSVT} and \textsf{KvN‑expm} in the Case C.}
\label{fig:L2norm_1D_nonlinear_QSVT_expm_various_Nx}
\end{figure}

\subsection{2D nonlinear Kelvin-Helmholtz instability test}
The Kelvin–Helmholtz instability is one of the most ubiquitous fluid phenomena in nature.  
Here, we reproduce the 2D Kelvin–Helmholtz instability as a representative nonlinear benchmark.  
The goals of this test are two‑fold:  
(i) to confirm the formation of vortices along the velocity shear layer, and  
(ii) to demonstrate that the growth rate of the initial unstable mode agrees with the theoretical value predicted by linear eigenvalue analysis.

The initial condition in Eq.~\eqref{eq:parameters_2D} is set as
\begin{align}
    u_1(x,y,t=0) &=
    \begin{cases}
        u_0 + \varepsilon \sin(k_x x + k_y y), \\ 
        \quad\quad\quad(-3\Delta y \leq y \leq 3\Delta y) \\
        -u_0,  \quad\quad\quad(\text{otherwise})
    \end{cases}\\
    u_2(x,y,t=0)&=0, \quad B_3(x,y,t=0)=B_0,\\
    E_1(x,y,t=0)&=0, \quad E_2(x,y,t=0)=0,
\end{align}
where x-space range~$[-N_x/2,N_x/2]$, y-space range~$[-N_y/2,N_y/2]$, the various number of spatial grids~$N_x=N_y=20$, the spatial grid size~$\Delta x=\Delta y=1$, the number of parameters~$N=2000$.
Then we set the plasma frequency~$\sqrt{\frac{n_e}{\epsilon_0 m_e}} q_e=-1$, 
$\Lambda=1$, 
degree of the polynomial approximation~$R=5$,
time parameter as a time step~$\tau=25$, 
the KvN truncation order~$m=2$, 
time steps~$N_t=2000$, 
and real time range~$[0,22.8]$.
To identify the mode with the maximum growth rate, we next perform an eigenvalue analysis of Eqs.~\eqref{eq:conti},\eqref{eq:MoE_fluid},\eqref{eq:ampere}, and \eqref{eq:farady}. 
We use the set of parameters that corresponds to the theoretically maximum growth rate of the unstable mode: 
the initial wave number~$k_x=0.99,k_y=0.2$, the initial background velocity~$u_0=1$, the initial magnetic field~$B_0=2$, and the disturbance parameter~$\varepsilon=0.1$.

Figure~\ref{fig:2DKH_velocity_field} shows the time evolution of the 2D velocity vector field obtained using \textsf{KvN‑QSVT}, 
which represents (a) $t=0.00$, (b) $t=1.07$, and (c) $t=2.15$.
The \textsf{KvN‑QSVT} results indicate the formation of velocity vortices at the velocity limits ($y=\pm3$) with exhibiting plasma oscillations (completed goal (i)).

Figure~\ref{fig:growth_rates} compares the growth rate of the unstable mode of the perturbation component $\tilde{u}_2$. 
We compare the results obtained using \textsf{KvN‑expm} (green line) with those from \textsf{KvN‑QSVT} (blue line). 
The red dashed line represents the theoretical unstable mode growth rate obtained from eigenvalue analysis corresponding to the initial perturbation setting. 
In Fig.~\ref{fig:growth_rates}, both \textsf{KvN‑QSVT} and \textsf{KvN‑expm} exhibit growth rates that agree with the theoretical prediction in the initial stage, thereby fulfilling goal (ii).
This suggests that \textsf{KvN‑QSVT} and \textsf{KvN‑expm} can reproduce the fundamental characteristics of Kelvin-Helmholtz instability.

\begin{figure*}
    \centering
    \begin{subfigure}{0.3\textwidth}
        \centering
        \includegraphics[scale=0.3]{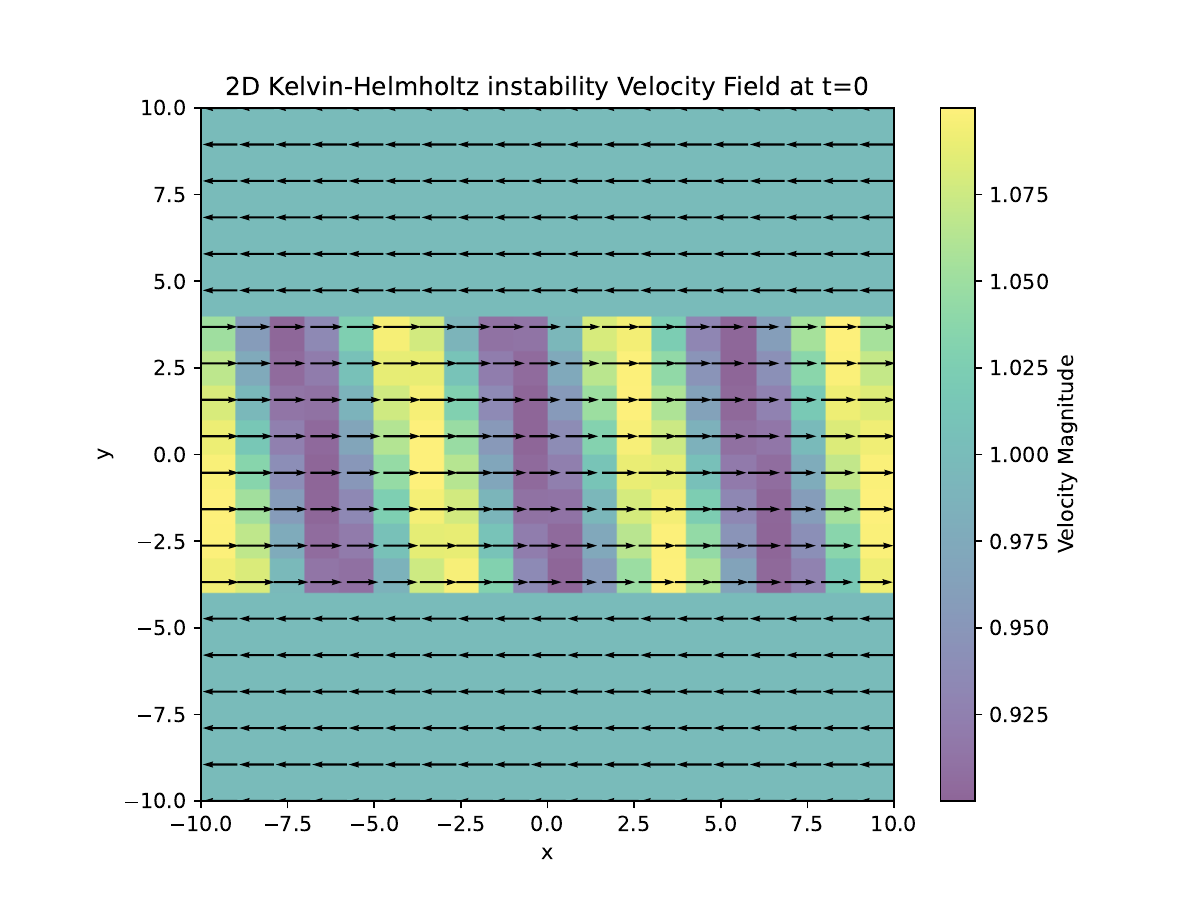}
        \caption{$t=0.00$}
    \end{subfigure}
    \begin{subfigure}{0.3\textwidth}
        \centering
        \includegraphics[scale=0.3]{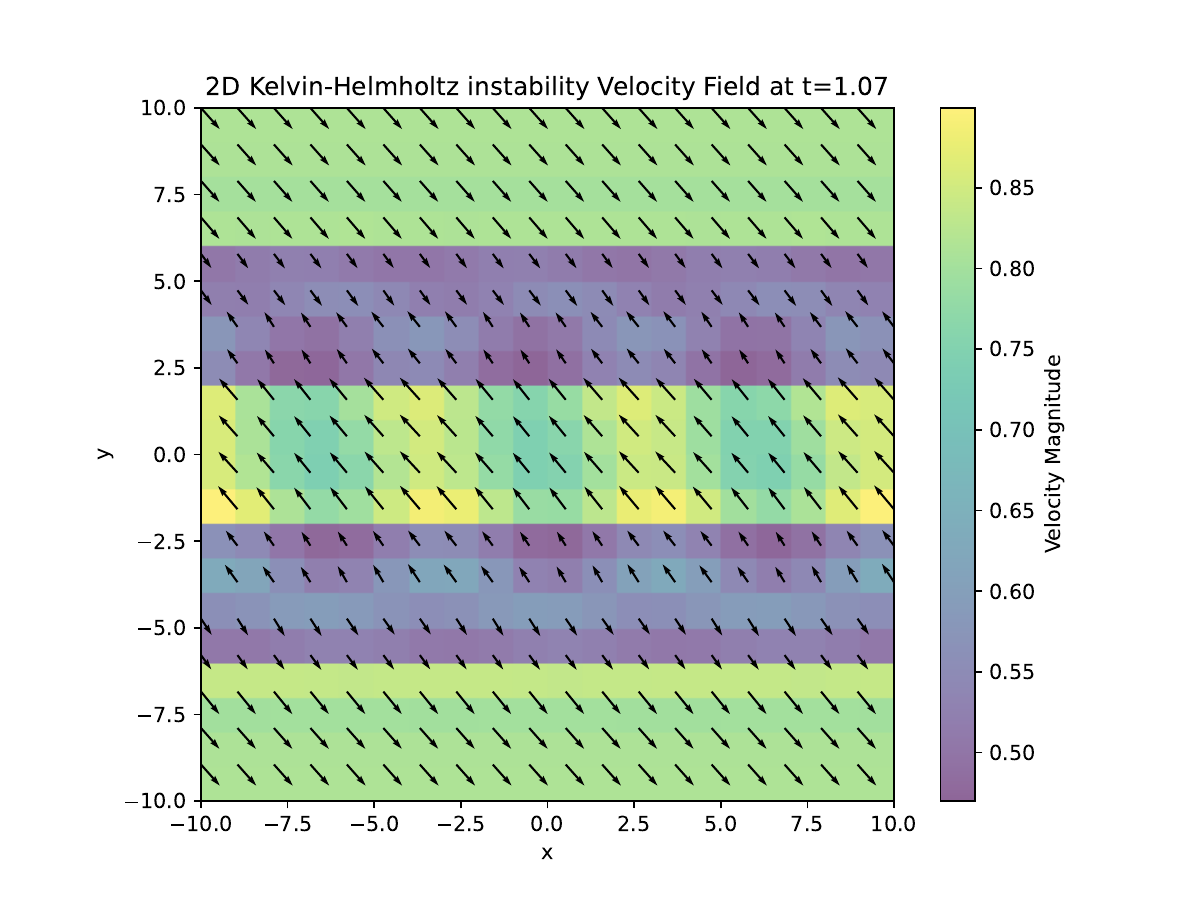}
        \caption{$t=1.07$}
    \end{subfigure}
    \begin{subfigure}{0.3\textwidth}
        \centering
        \includegraphics[scale=0.3]{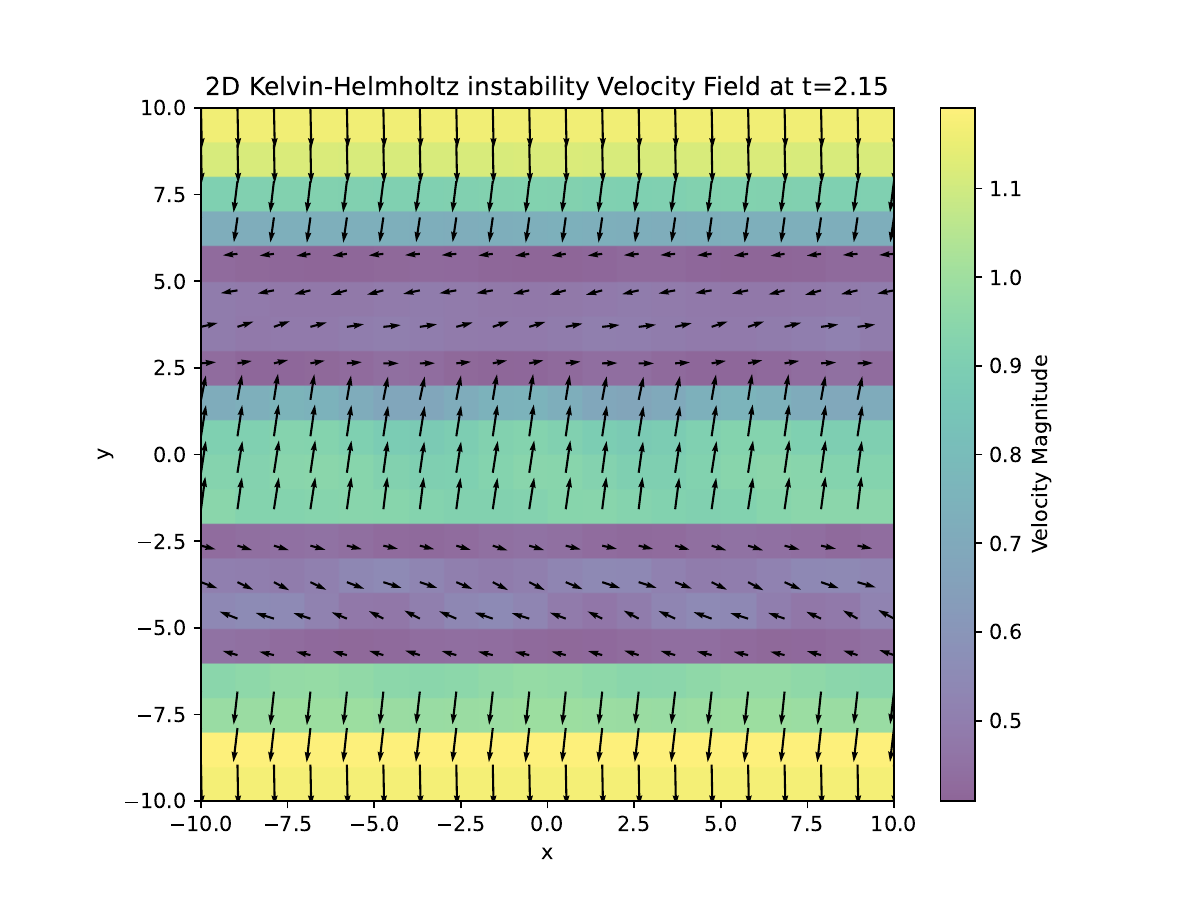}
        \caption{$t=2.15$}
    \end{subfigure}
    \caption{The velocity vector fields of the 2D Kelvin-Helmholtz instability obtained from \textsf{KvN‑QSVT} simulations. (a): \( t = 0 \), (b): \( t = 1.07 \) ($=0.5T_{\text{eign}}$), and (c): \( t = 2.15 \) ($=1.0T_{\text{eign}}$). The color bar represents the magnitude of the velocity vector.}
\label{fig:2DKH_velocity_field}
\end{figure*}

\begin{figure}
    \centering
    \includegraphics[scale=0.34]{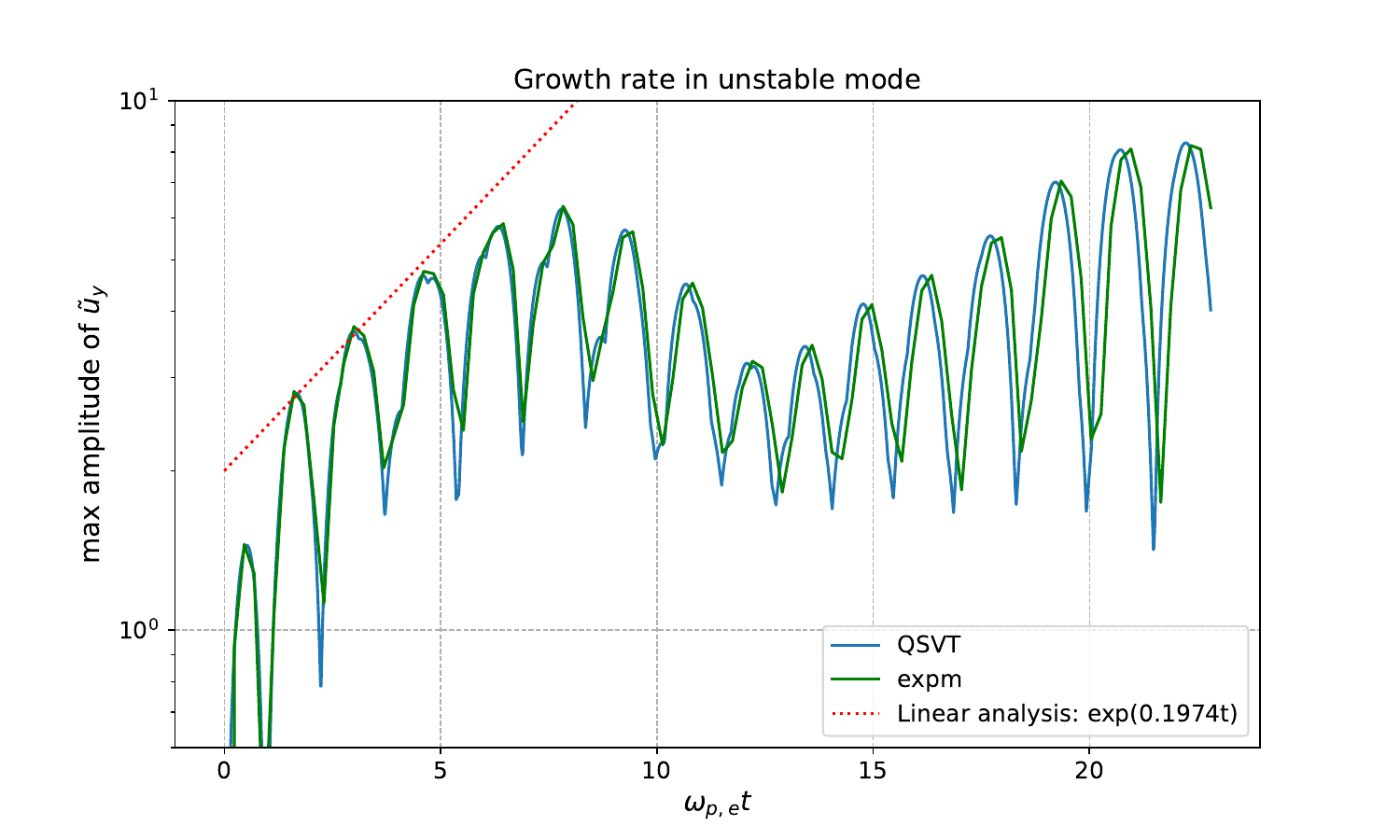}
    \caption{The time evolution of the maximum perturbation velocity \( \tilde
    {u}_y \) obtained from \textsf{KvN‑QSVT} (blue line) and \textsf{KvN‑expm} (green line). The red dashed line represents the growth rate of the unstable mode derived from linear analysis.}
\label{fig:growth_rates}
\end{figure}

\section{Conclusion and Discussion}\label{sec:Conclusion}
This study focuses on the fundamental equations of plasma dynamics and the application of quantum algorithms to nonlinear partial differential equations (PDEs).  
We propose a novel approach in which the electromagnetic fluid dynamics are linearized 
using the KvN formalism and the time evolution via QSVT.  
Utilizing the quantum circuit simulator Qulacs, 
we implement this method and demonstrate its performance across four numerical test cases.

Plasma fluid simulations are a representative application of electromagnetic fluid dynamics, 
yet large‑scale, high resolution computations of multidimensional systems are hindered by the $O\left(s N_x^{s}(T^{5/4}\epsilon^{-1/4}+T N_x) \right)$ scaling of classical algorithms in time complexity. 
The quantum scheme proposed here reduces this time complexity to $O \left(s \, \mathrm{polylog} \left( N_x \right) \left( N_x T + \log \left(1/\epsilon \right) \right) \right)$, 
which implies a polynomial speedup with respect to $N_x$ and an exponential speedup with respect to $s$.
The proposed algorithm also improves the space complexity from $O(sN_x^s)$ to $O\left( s \, \mathrm{polylog} \left( N_x \right) \right)$, 
which means an exponential space advantage over the classical methods.
Numerical experiments, using performance metrics such as energy conservation and instability growth rates, demonstrate that sufficient accuracy is achieved with values of $m$ and $R$ smaller than theoretically predicted in practical problem settings.
Hence, given adequate quantum resources, the method promises substantial quantum acceleration for large‑scale simulations.

\begin{acknowledgments}
Mr. K. Toyoizumi kindly provided the foundational code on which our QSVT-based Hamiltonian simulation implementation is built, for which we are deeply grateful.
Dr. J. W. Pedersen is acknowledged for engaging in insightful discussions with HH that served as the seed of this study.
HH gratefully acknowledges the financial support of the Kyushu University Innovator Fellowship Program (Quantum Science Area).
HH and AY are deeply grateful for support from JSPS KAKENHI Grant Numbers JP20H01961 and JP22K21345.
YI, KS, and KF are supported by MEXT Quantum Leap Flagship Program (MEXT Q-LEAP) Grant No. JPMXS0120319794 and JST COI-NEXT Grant No. JPMJPF2014.
YI and KS are also supported by JST SPRING Grant No. JPMJSP2138 and the $\Sigma$ Doctoral Futures Research Grant Program from The University of Osaka.
\end{acknowledgments}

\bibliographystyle{apsrev4-2}
\bibliography{apssamp}

\appendix

\section{Proof of Lemma~\ref{lem:nabla_3D}} \label{appendix:lem-nabla_3D}
We prove Lemma~\ref{lem:nabla_3D}.

\noindent
\textit{Proof of Lemma~\ref{lem:nabla_3D}}
\hspace{1em}
We can calculate the right hand side of Eq.~\eqref{eq:nabla-3D} as follows:
\begin{widetext}
\begin{small}
\begin{align}
    \begin{split}
        &\quad 
        \sum_{i=1}^3 \frac{1}{4 \Delta r_i}\left(\frac{1}{\sqrt{n(\mathbf{r}+\Delta r_i\mathbf{e}_i)}}\tilde{u}_i(\mathbf{r}+\Delta r_i\mathbf{e}_i,t) \tilde{\mathbf{u}} (\mathbf{r}+2 \Delta r_i\mathbf{e}_i,t) - \frac{1}{\sqrt{n(\mathbf{r}-\Delta r_i\mathbf{e}_i)}}\tilde{u}_i(\mathbf{r}-\Delta r_i\mathbf{e}_i,t) \tilde{\mathbf{u}} (\mathbf{r}-2 \Delta r_i\mathbf{e}_i,t)\right)
    \end{split} \\
    \begin{split}
        &=\sum_{i=1}^3\frac{1}{4 \Delta r_i} 
        \left\{
        \left( \frac{1}{\sqrt{n}}-\frac{1}{2n\sqrt{n}}\Delta r_i\frac{\partial n}{\partial r_i}+O\left((\Delta r_i)^2\right)
        \right)
        \left( \tilde{u}_i + \Delta r_i \frac{\partial \tilde{u}_i}{\partial r_i} 
        + O\left((\Delta r_i)^2\right) 
        \right)
        \left( \tilde{\mathbf{u}} + 2 \Delta r_i \frac{\partial \tilde{\mathbf{u}}}{\partial r_i} 
        + O\left((\Delta r_i)^2\right) 
        \right)
        \right.
        \\
        &\quad\quad\quad\quad\quad \left.-\left( \frac{1}{\sqrt{n}}+\frac{1}{2n\sqrt{n}}\Delta r_i\frac{\partial n}{\partial r_i}+O\left((\Delta r_i)^2\right)
        \right)
        \left( \tilde{u}_i - \Delta r_i \frac{\partial \tilde{u}_i}{\partial r_i} 
        + O\left((\Delta r_i)^2\right) 
        \right)
        \left( \tilde{\mathbf{u}} - 2 \Delta r_i \frac{\partial \tilde{\mathbf{u}}}{\partial r_i} 
        + O\left((\Delta r_i)^2\right) 
        \right)
        \right\}  
    \end{split} \\
    \begin{split}
        &= \sum_{i=1}^3\left\{
        -\frac{1}{4n\sqrt{n}}\frac{\partial n}{\partial r_i}\tilde{u}_i \tilde{\mathbf{u}}
        + \frac{1}{2\sqrt{n}}\frac{\partial \tilde{u}_i}{\partial r_i}\tilde{\mathbf{u}} + \frac{1}{\sqrt{n}}\tilde{u}_i\frac{\partial \tilde{\mathbf{u}}}{\partial r_i}
        +O\left( \Delta r_i \right) \right\}
    \end{split} \no\\
    &\xrightarrow[\Delta \mathbf{r} \to 0]{} 
    -\frac{1}{4n\sqrt{n}}\left(\tilde{\mathbf{u}}\cdot \left(\nabla n\right) \right) \tilde{\mathbf{u}}
    + \frac{1}{2\sqrt{n}}\left(\nabla\cdot \tilde{\mathbf{u}} \right) \tilde{\mathbf{u}} 
    + \frac{1}{\sqrt{n}}\left(\tilde{\mathbf{u}}\cdot \nabla \right) \tilde{\mathbf{u}}.\label{eq:discretize_ududx}
\end{align}
\end{small}
\end{widetext}
That is, we obtain the statement.
\qed

\section{Relabeling for KvN linearization} \label{appendix:relabeling}

As explained in Sec.~\ref{subsec:KvN-linearization}, 
to realize an exponential reduction in the number of qubits with respect to the number of variables $N$ on a classical computer,
we restrict the total particle number to at most $m$.
In this section,
we describe a method for reducing the number of using qubits 
by relabeling the indices of the state $\otimes_{j=0}^{N-1} \ket{n_j}$.

For relabeling, we introduce the following order $\prec$ on $\left( \mathbb{Z}_{\ge 0}  \right)^N$:
\begin{align}
    &(n_0, \dots, n_{N-1}) \prec (n_0^\prime, \dots, n_{N-1}^\prime) \no \\
    \Longleftrightarrow &\left( \sum_{j=0}^{N-1} n_j < \sum_{j=0}^{N-1} n_j^\prime \right) \quad 
    \mathrm{or} \no \\
    & \Biggl( \sum_{j=0}^{N-1} n_j = \sum_{j=0}^{N-1} n_j^\prime  \quad \mathrm{and} \no \\
    &\exists k \in \{0 \dots, N-1 \}, \ \mathrm{s.t.} \no \\
    & n_k < n_k^\prime, n_l = n_l^\prime \ \left( l \in \{k+1, \dots, N-1\} \right) \Biggr).
\end{align}
This ordering is defined primarily by the total sum of the values in the sequence, 
where sequences with larger sums are considered greater.
If the total sums are equal, 
the order is determined by comparing the values from right to left.
Using the proposed order,
we relabel the indices as follows:
\begin{equation} \label{eq:relabeling}
\begin{split}
    G &\colon 
    \left( \mathbb{Z}_{\ge 0}  \right)^N \rightarrow \mathbb{Z}_{\ge 0} ; \\
    &(n_0, \dots, n_{N-1})\\
    &\mapsto 
    \begin{dcases}
        0 \hspace{3cm}(n_0=\dots=n_{N-1}=0) \\
        \begin{split}
        {}_{N+K-1} C_N + 
        \sum_{k=1}^{N-1}  \sum_{j=0}^{n_k-1} 
        &{}_{K-\sum_{l=k+1}^{N-1}n_l -j+k-1} C_{k-1} \\
        & \hspace{1.6cm} (\mathrm{otherwise})
        \end{split}
    \end{dcases}
    ,
\end{split}
\end{equation}
where $K=\sum_{j=0}^{N-1} n_j$.
This relabeling requires $O({}_{N+m}C_m)=O(N^m)$ variables, 
which implies the number of using qubits is $O(m \log N)$.
This implies an exponential reduction in the number of qubits with respect to the number of variables.

We note that this labeling is different from the labeling utilized in Ref.~\cite{Engel2021, Tanaka2024}.
In Ref.~\cite{Engel2021, Tanaka2024},
any given state $\ket{\mathbf{n}} = \bigotimes_{i=0}^{N-1} \ket{n_i}$ 
is transformed as follows:
\begin{equation}
    \ket{\mathbf{n}} \mapsto \ket{\mathbf{n}_m} 
    \coloneqq
    \ket{
    \underbrace{\overbrace{0 \cdots 0}^{m-\sum_{i=0}^{N-1} n_i} \,
    \overbrace{1 \cdots 1}^{n_0} \,
    \cdots \,
    \overbrace{N \cdots N}^{n_{N-1}}}_m},
\end{equation}
where $0, 1, \dots, N$ are encoded in binary using $\log N$ qubits each.
This encoding is a bit redundant since the mode indices are arranged in ascending and certain basis states are forbidden.
To remove the redundancy, we introduced the proposed labeling.

\section{Block encoding of the mapped Hamiltonian} \label{appendix:block-encoding-hamiltonian}
In this section, we provide a framework for constructing the block encoding of the Hamiltonian $\hat{H}_m$, which is truncated from the Hamiltonian in Eq.~\eqref{eq:def_hamiltonian}. 
We note that the block encoding provided here is intended for implementation on quantum computers in the future and differs from those used in numerical simulations.
Also note that we do not provide the detailed circuit description but the specific direction for constructing the circuit in the future work.

First, we explain the general framework for the block encoding of sparse Hamiltonians.
If we want to encode the Hamiltonian of dimension $M< 2^w$, then we can define an encoded Hamiltonian $A_e\in \mathbb{C}^{2^w\times 2^w}$ such that the top-left block of $A_e$ is $A$ and all other elements are 0.
\begin{lemma}[Hermitian block-encoding of sparse Hamiltonians~\cite{camps2024explicit}]\label{lem:sparse-block-encoding}
Let $A$ be an $s$-sparse Hamiltonian of dimension $2^w$ and $A_{\mathrm{max}}$ be an upper-bound of the max-norm of $A$, i.e., each element of $A$ has absolute value at most $A_{\mathrm{max}}$.
Suppose that we have the following sparse-access oracles acting on two $(w+1)$-qubit registers which query the locations of the non-zero entries:
\begin{equation}
O_C \ket{j}\ket{l} = \ket{j} \ket{c(j,l)},
\end{equation}
where $c(j,l)$ is the index for the $l$-th non-zero entry of the $j$-th column of $A$, or if there are less than $j$ non-zero entries, then it is $j+2^w$.
Also, assume that we have the access to the oracle $O_A$ which returns the entries of $A$:
\begin{equation}
O_A \ket{j}\ket{l} \ket{0} = \ket{j}\ket{l} \left( \sqrt{\frac{A_{c(j,l),j}}{A_{\mathrm{max}}}}\ket{0} + \sqrt{1 - \left|\frac{A_{c(j,l),j}}{A_{\mathrm{max}}}\right|}\ket{1} \right).
\end{equation}
Then we can implement an $(sA_{\mathrm{max}}, w+3, 0)$-block-encoding of $A$ with two uses of $O_C$ and $O_A$ and additionally using $O(w)$ one and two qubit gates.
\end{lemma}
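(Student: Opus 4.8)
The plan is to realize $U$ as the standard quantum-walk block-encoding of a sparse matrix (Berry--Childs--Kothari~\cite{berry2017quantum}; cf.\ Lemma~48 of Ref.~\cite{Gilyen2019}), adapted to the oracle interface stated here and made exact by careful phase bookkeeping as in Ref.~\cite{camps2024explicit}. Set $A' := A/A_{\mathrm{max}}$, so that $|A'_{kj}|\le 1$ for all $k,j$; it suffices to produce an exact $(s, w{+}3, 0)$-block-encoding of $A'$ and then multiply the subnormalization by $A_{\mathrm{max}}$. I would take $U = U_{\mathrm{prep}}^{\dagger}\,\mathrm{SWAP}\,U_{\mathrm{prep}}$, where $\mathrm{SWAP}$ exchanges the system register with the $(w{+}1)$-qubit ``index'' subregister of the ancilla (padding to equal size; in the surviving sector the overflow bit vanishes), and $U_{\mathrm{prep}}$ is a state-preparation unitary on the $(w{+}3)$ ancilla qubits: the index register, one flag qubit for the value, and one scratch qubit for the phase. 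Writing $T := U_{\mathrm{prep}}(\ket{0}^{\otimes(w+3)}\otimes I)$, the top-left block of $U$ is exactly $T^{\dagger}\,\mathrm{SWAP}\,T$, so the proof reduces to showing $T^{\dagger}\,\mathrm{SWAP}\,T = A'/s$.

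First I would specify $U_{\mathrm{prep}}$ on $\ket{0}^{\otimes(w+3)}\ket{j}$ as a composition of three steps: (i) a diffusion gate creating $\tfrac{1}{\sqrt{s}}\sum_{l=0}^{s-1}\ket{l}$ on the index subregister, costing $O(w)$ one- and two-qubit gates (taking $s$ a power of two, or rounding up); (ii) one query to $O_A$, which loads $\sqrt{A'_{c(j,l),j}}$, together with a controlled half-phase $e^{i\arg A'_{c(j,l),j}/2}$ supplied via the scratch qubit, onto the $\ket{0}$-amplitude of the flag; (iii) one query to $O_C$, relabeling $\ket{l}\mapsto\ket{c(j,l)}$ on the index subregister. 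The output is the walk state $\ket{j}\otimes\ket{\psi_j}$ with $\ket{\psi_j} = \tfrac{1}{\sqrt{s}}\sum_{l}\bigl(\sqrt{A'_{c(j,l),j}}\,\ket{c(j,l)}\ket{0} + \sqrt{1-|A'_{c(j,l),j}|}\,\ket{c(j,l)}\ket{1}\bigr)$; the terms with $l$ beyond the number of nonzeros of column $j$ have $c(j,l)=j+2^{w}$ and $A'_{j+2^w,j}=0$, so they lie entirely in the $\ket{1}$ (``bad'') flag sector and may be dropped. This $U_{\mathrm{prep}}$ is manifestly unitary, uses $O_C$ and $O_A$ once each plus $O(w)$ elementary gates, hence $U$ uses $O_C, O_C^{\dagger}, O_A, O_A^{\dagger}$ once each (two uses of each oracle) and $O(w)$ further gates including the $(w{+}1)$-qubit SWAP.

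The verification is the computation of $T^{\dagger}\,\mathrm{SWAP}\,T\ket{j}$. Applying $\mathrm{SWAP}$ to $\ket{j}\otimes\ket{\psi_j}$ moves the row index $c(j,l)$ into the system register; then $T^{\dagger}$ overlaps the branch $\ket{c(j,l)}$ in the system with $\bra{\psi_{c(j,l)}}$ in the ancilla. By Hermiticity of $A$ one has $A'_{c(j,l),j}\ne 0 \iff A'_{j,c(j,l)}\ne 0$, so $j$ occurs in the nonzero list of column $c(j,l)$ and exactly one term of $\ket{\psi_{c(j,l)}}$ survives, contributing amplitude $\tfrac{1}{s}\,\sqrt{A'_{c(j,l),j}}\cdot\overline{\sqrt{A'_{j,c(j,l)}}}$ to $\ket{c(j,l)}$. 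Using $A'_{j,c(j,l)}=\overline{A'_{c(j,l),j}}$ together with the phase correction from step (ii), this equals $\tfrac{1}{s}A'_{c(j,l),j}$; summing over $l$ gives $T^{\dagger}\,\mathrm{SWAP}\,T\ket{j}=\tfrac{1}{s}\sum_{k}A'_{kj}\ket{k}=\tfrac{1}{s}A'\ket{j}$. Rescaling yields the $(sA_{\mathrm{max}}, w{+}3, 0)$-block-encoding of $A$ with error exactly $0$.

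The step I expect to be the main obstacle is precisely the one glossed over in the naive ``two square roots multiply'' argument: for entries $A'_{kj}$ that are negative real (or carry a nontrivial phase), the identity $\overline{\sqrt{\overline{A'_{kj}}}}=\sqrt{A'_{kj}}$ used above fails by a sign because the principal branch of the square root jumps across the negative real axis, so a bare double application of $O_A$ block-encodes $|A'|/s$ rather than $A'/s$. Fixing the sign is exactly why the scratch qubit and the controlled half-phase $e^{i\arg A'_{kj}/2}$ are needed, and verifying that this correction is consistent between $U_{\mathrm{prep}}$ and $U_{\mathrm{prep}}^{\dagger}$ — i.e.\ that it neither spoils unitarity nor disturbs the cancellations — is the one point that requires genuine care; it is the content carried through in detail in Ref.~\cite{camps2024explicit}.
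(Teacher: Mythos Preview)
Your construction is correct and is the standard quantum-walk block-encoding $U_{\mathrm{prep}}^{\dagger}\,\mathrm{SWAP}\,U_{\mathrm{prep}}$; the paper does not actually prove this lemma but merely cites it from Ref.~\cite{camps2024explicit}, and your sketch faithfully follows that reference, including the phase-correction subtlety you flag at the end. For comparison, the paper does give an explicit proof of the closely related alternative version (Lemma~\ref{lem:alter-sparse-block-encoding}) via exactly the same $V^{\dagger}U_{\mathrm{SWAP}}V$ structure---though, notably, that proof glosses over the very branch-cut issue you identify, writing $\sqrt{A_{i,j}^{*}A_{j,i}}=A_{j,i}$ without comment; your treatment is in that respect more careful than the paper's own exposition of the companion result.
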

Lemma~\ref{lem:sparse-block-encoding} relies on the existence of the oracle $O_C$.
However, it is non-trivial whether we can implement $O_C$ efficiently.
In our case, we can easily construct an alternative oracle $O_{\tilde{C}}$ acting on $2(w+1)+\lceil\log{s}\rceil$ qubits defined as:
\begin{equation}
O_{\tilde{C}} \ket{j}\ket{l}\ket{0^{w+1}} = \ket{j} \ket{l} \ket{c(j,l)},
\end{equation}
which makes us use an alternative version of block-encoding.
\begin{lemma}[alternative version of hermitian block-encoding of sparse Hamiltonians]\label{lem:alter-sparse-block-encoding}
Let $A$ be an $s$-sparse Hamiltonian of dimension $2^w$ and $A_{\mathrm{max}}$ be an upper-bound of the max-norm of $A$, i.e., each element of $A$ has absolute value at most $A_{\mathrm{max}}$.
Suppose that we have the following sparse-access oracles acting on two $2(w+1)+\lceil\log{s}\rceil$-qubit registers which query the locations of the non-zero entries:
\begin{equation}
O_{\tilde{C}} \ket{j}\ket{l}\ket{0^{w+1}} = \ket{j} \ket{l} \ket{c(j,l)},
\end{equation}
where $c(j,l)$ is the index for the $l$-th non-zero entry of the $j$-th column of $A$, or if there are less than $j$ non-zero entries, then it is $j+2^n$.
Also, assume that we have the access to the oracle $O_A$ which returns the entries of $A$:
\begin{equation}
O_A \ket{j}\ket{l} \ket{0} = \ket{j}\ket{l} \left( \sqrt{\frac{A_{c(j,l),j}}{A_{\mathrm{max}}}}\ket{0} + \sqrt{1 - \left|\frac{A_{c(j,l),j}}{A_{\mathrm{max}}}\right|}\ket{1} \right).
\end{equation}
Hence we can implement an $(s^2 A_{\mathrm{max}}, w+3, 0)$-block-encoding of $A$ with two uses of $O_C$ and $O_A$ and additionally using $O(w)$ one and two qubit gates.
\end{lemma}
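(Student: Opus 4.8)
\textit{Proof proposal.}
The plan is to reduce to Lemma~\ref{lem:sparse-block-encoding}. Recall that the construction there uses the \emph{in-place} column oracle $O_C$ twice — a ``compute'' call $\ket j\ket{0^{w+1}}\mapsto\ket j\ket{c(j,l)}$ and an ``uncompute'' call, the latter exploiting Hermiticity of the sparsity pattern of $A$ (i.e.\ $A_{ij}\neq 0\iff A_{ji}\neq 0$) — together with a diffusion over the $s$ count indices $l$, the entry oracle $O_A$, and a swap of the data registers. The only obstacle to running that circuit here is that we are given the \emph{out-of-place} oracle $O_{\tilde C}$, which writes $c(j,l)$ into a fresh register but never erases the count index $l$. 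I would therefore rerun the Lemma~\ref{lem:sparse-block-encoding} circuit essentially verbatim, carrying $\ket l$ along as an extra $\lceil\log s\rceil$-qubit ancilla and paying for its eventual disentanglement.

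Concretely, first I would embed the original construction in a lifted register: keep a separate $\lceil\log s\rceil$-qubit count register alongside the two $(w{+}1)$-qubit data registers, apply the same diffusion to it to obtain the uniform superposition over $l\in[s]$ (contributing the usual factor $1/\sqrt s$), and then use $O_{\tilde C}$ — extended to the whole register by the unitary rule $\ket j\ket l\ket b\mapsto\ket j\ket l\ket{b\oplus c(j,l)}$, which agrees with the stated action when $b=\ket{0^{w+1}}$ — to write the row index $i=c(j,l)$ into the (initially $\ket{0^{w+1}}$) target register while retaining $\ket l$. The forward part (apply $O_A$ on the column--row pair, then swap the column and row registers) then proceeds exactly as in Lemma~\ref{lem:sparse-block-encoding}, with the retained count register riding along untouched. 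The cost is incurred in the uncompute step: $O_{\tilde C}^\dagger$ returns the target register to $\ket{0^{w+1}}$ only if the retained count index happens to match the one implicitly fixed by the constraint $c(i,\cdot)=j$, which is not guaranteed; disentangling the retained register therefore requires an additional diffusion/comparison over the $s$ candidate count indices, contributing a second factor $1/\sqrt s$. Collecting the two $1/\sqrt s$ factors together with the entry normalization $A_{\max}$ coming from $O_A$ yields subnormalization $s^2 A_{\max}$ in place of $sA_{\max}$, while the ancilla count remains $w+3$ up to constants and the number of oracle and elementary-gate calls is as claimed.

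The main obstacle is precisely this uncompute step: one must verify that the retained count register can be removed using only $O_{\tilde C}$, $O_{\tilde C}^\dagger$ and Hermiticity of $A$ — i.e.\ without a separate ``row''/inverse-index oracle — and that doing so inflates the subnormalization by \emph{exactly} one factor of $s$ and no more. The facts that make this go through are that for Hermitian $A$ the nonzero rows of column $i$ coincide with the nonzero columns of row $i$, so the uncomputing query to $O_{\tilde C}$ lands on a genuine nonzero entry, and that for each such entry there is a unique count index consistent with $c(i,\cdot)=j$, which a single postselection isolates (this is the source of the extra $1/\sqrt s$). The remaining amplitude bookkeeping — that the data register emerges holding $A\ket\psi/(s^2 A_{\max})$ on the all-$\ket{0}$ ancilla branch — is identical to that already carried out in the proof of Lemma~\ref{lem:sparse-block-encoding}, and the elementary-gate count $O(w)$ is dominated by the diffusions, swaps and comparators.
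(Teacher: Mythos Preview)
Your diagnosis is right --- the extra factor of $s$ comes from the inability to erase the count index $l$ retained by the out-of-place oracle, and a second diffusion over $[s]$ is what pays for it --- and once made explicit your construction is equivalent to the paper's. But the route and presentation differ.

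The paper does not reduce to Lemma~\ref{lem:sparse-block-encoding}. It gives a direct, symmetric construction: set $V = O_{\tilde C}\, O_A\, (D\otimes D)$, where \emph{two} diffusions $D$ over $[s]$ are applied from the outset to two separate count registers $\ket{l}$ and $\ket{k}$, and take the block encoding to be $V^\dagger U_{\mathrm{SWAP}} V$. The matrix element $\bra{j}\cdots\ket{i}$ is then computed explicitly: the swap exchanges the two halves, and the inner product produces Kronecker deltas $\delta_{l',k}\,\delta_{c(j,l'),i}\,\delta_{k',l}\,\delta_{j,c(i,l)}$ that collapse both double sums and leave $\sqrt{A_{i,j}^*A_{j,i}}/(s^2 A_{\max}) = A_{j,i}/(s^2 A_{\max})$ by Hermiticity. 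No separate ``uncompute'' analysis is needed; the $V^\dagger$ half and the inner-product algebra do all the work.

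Your version --- one diffusion up front, carry $\ket l$ through the Lemma~\ref{lem:sparse-block-encoding} circuit, then an ``additional diffusion/comparison'' to disentangle it --- is really the same circuit after commuting the second diffusion (on a disjoint register) back to the start. What you lose by framing it as a patch to Lemma~\ref{lem:sparse-block-encoding} is precision at the critical step: you assert that ``a single postselection isolates'' the unique $l'$ with $c(i,l')=j$ without saying which unitary implements this or verifying the amplitude on the all-$\ket 0$ branch. The paper's symmetric formulation avoids this vagueness entirely, at the cost of not making the connection to Lemma~\ref{lem:sparse-block-encoding} as transparent as your narrative does.
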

\begin{proof}
First, we define the unitary operator $D$ that prepares a uniform superposition state over $s$ computational bases:
\begin{equation}
    D\ket{0^{\lceil \log_2 s \rceil}} = \frac{1}{\sqrt{s}} \sum_{l=0}^{s-1} \ket{l},
\end{equation}
which can be implemented with $O(\lceil \log_2 s \rceil)$ one and two qubit gates~\cite{sanders2020compilation}.
We define the $2(w+1)+2\lceil \log_2 s \rceil + 2$-qubit unitary $V=O_{\tilde{C}} O_A (D\otimes D)$ such that:
\begin{align}
&\ket{0}\ket{0}\ket{0^{\lceil \log_2 s \rceil}}\ket{0^w}\ket{0^{\lceil \log_2 s \rceil}}\ket{i} \\
\stackrel{D\otimes D}{\longrightarrow}& \ket{0}\ket{0} \left( \frac{1}{\sqrt{s}} \sum_{l=0}^{s-1} \ket{l} \right) \ket{0^w} 
\left( \frac{1}{\sqrt{s}} \sum_{k=0}^{s-1} \ket{k} \right) \ket{i} \\
\stackrel{O_{A}}{\longrightarrow} 
\begin{split}
 &\ket{0} \frac{1}{\sqrt{s}} \sum_{l=0}^{s-1} \left( \sqrt{\frac{A_{c(i,l),i}}{A_{\mathrm{max}}}}\ket{0} + \sqrt{1 - \left|\frac{A_{c(i,l),i}}{A_{\mathrm{max}}}\right|} \ket{1} \right)  \\
 &\hspace{4.5cm}  \ket{l} \ket{0^w} \left( \frac{1}{\sqrt{s}} \sum_{k=0}^{s-1} \ket{k} \right) \ket{i} 
\end{split} \\
\stackrel{O_{\tilde{C}}}{\longrightarrow}
\begin{split}
&\ket{0} \frac{1}{\sqrt{s}} \sum_{l=0}^{s-1} \left( \sqrt{\frac{A_{c(i,l),i}}{A_{\mathrm{max}}}}\ket{0} + \sqrt{1 - \left|\frac{A_{c(i,l),i}}{A_{\mathrm{max}}}\right|}\ket{1} \right) \\
&\hspace{4cm} \ket{l} \ket{c(i,l)} \left( \frac{1}{\sqrt{s}} \sum_{k=0}^{s-1} \ket{k} \right) \ket{i} \end{split}\\
\begin{split} 
=& \frac{1}{s} \sum_{l,k=0}^{s} \ket{0} \left( \sqrt{\frac{A_{c(i,l),i}}{A_{\mathrm{max}}}}\ket{0} + \sqrt{1 - \left|\frac{A_{c(i,l),i}}{A_{\mathrm{max}}}\right|}\ket{1} \right) \\
& \hspace{5.5cm} \ket{l} \ket{c(i,l)} \ket{k} \ket{i}.
\end{split}
\end{align}
Applying the $2(w+1)+2\lceil \log_2 s \rceil + 2$-qubit SWAP gate $U_{\mathrm{SWAP}}$, the state becomes
\begin{align}\label{eq:SWAPVinput}
&\frac{1}{s} \sum_{l,k=0}^{s} \left( \sqrt{\frac{A_{c(i,l),i}}{A_{\mathrm{max}}}}\ket{0} + \sqrt{1 - \left|\frac{A_{c(i,l),i}}{A_{\mathrm{max}}}\right|}\ket{1} \right) \no\\
& \hspace{5.5cm}\ket{0} \ket{k} \ket{i} \ket{l} \ket{c(i,l)}.
\end{align}
Then, we can see that $V^\dag U_{\mathrm{SWAP}} V$ is an $(s^2 A_{\mathrm{max}}, w+3, 0)$-block-encoding of $A$ as follows:
Taking the inner product of Eq.~\eqref{eq:SWAPVinput} and 
\begin{equation}
\begin{split}
&\bra{0}\bra{0}\bra{0^{\lceil \log_2 s \rceil}}\bra{0^w}\bra{0^{\lceil \log_2 s \rceil}}\bra{j} V^\dag \\
=& \frac{1}{s} \sum_{l',k'=0}^{s} \bra{0} \left( \sqrt{\frac{A_{c(j,l'),j}^*}{A_{\mathrm{max}}}}\bra{0} + \sqrt{1 - \left|\frac{A_{c(j,l'),j}}{A_{\mathrm{max}}}\right|}\bra{1} \right) \\
& \hspace{5cm} \bra{l'} \bra{c(j,l')} \bra{k'} \bra{j} ,
\end{split}
\end{equation}
we obtain 
\begin{align}
\begin{split}
&\bra{0}\bra{0}\bra{0^{\lceil \log_2 s \rceil}}\bra{0^w}\bra{0^{\lceil \log_2 s \rceil}}\bra{j} \\
&\quad V^\dag U_{\mathrm{SWAP}} V \ket{0}\ket{0}\ket{0^{\lceil \log_2 s \rceil}}\ket{0^w}\ket{0^{\lceil \log_2 s \rceil}}\ket{i} 
\end{split}\\
=& \frac{1}{s^2} \sum_{l',k',l,k,=0}^{s} \frac{\sqrt{A_{c(j,l'),j}^* A_{c(i,l),i} }}{A_{\mathrm{max}}} \delta_{l',k} \delta_{c(j,l'), i} \delta_{k',l} \delta_{j,c(i,l)} \\
=& \frac{1}{s^2} \sum_{l',l=0}^{s} \frac{\sqrt{A_{c(j,l'),j}^* A_{c(i,l),i} }}{A_{\mathrm{max}}} \delta_{c(j,l'), i} \delta_{j,c(i,l)} \\
=& \frac{1}{s^2} \frac{\sqrt{A_{i,j}^* A_{j,i} }}{A_{\mathrm{max}}} 
= \frac{A_{j,i}} {s^2 A_{\mathrm{max}}}.
\end{align}
\end{proof}

By virtue of Lemma~\ref{lem:alter-sparse-block-encoding}, we only have to implement $O_{\tilde{C}}$ and $O_A$ for our Hamiltonian $\hat{H}$ in Eq.~\eqref{eq:def_hamiltonian}.
We describe how to implement $O_{\tilde{C}}$ and $O_A$ simultaneously in the following.

Our target unitary acts as
\begin{equation}
\begin{split}
&O_{\tilde{C}} O_A \left( \ket{j}\frac{1}{\sqrt{mc2^d}}\sum_{l=1}^{mc2^d}\ket{l}\ket{0^{m\log(N)}}\ket{0} \right) \\
=& \ket{j}\frac{1}{\sqrt{mc2^d}}\sum_{l=1}^{mc2^d}\ket{l}\ket{c(j,l)} \\
&\quad\left( \sqrt{\frac{\hat{H}_{c(j,l),j}}{\hat{H}_{\mathrm{max}}}}\ket{0} + \sqrt{1 - \left|\frac{\hat{H}_{c(j,l),j}}{\hat{H}_{\mathrm{max}}}\right|}\ket{1} \right).
\end{split}
\end{equation}
Here, the index $j$ is the compressed form of some particle configuration $\bm{n_j}$ transformed by the function $j=G(\bm{n_j})$ defined in Eq.~\eqref{eq:relabeling}.
The index $c(j,l)$ is the location of $l$-th nonzero entry, which is also the compressed form of some particle configuration $\bm{n_{c(j,l)}}$ satisfying $\braket{\bm{n_{c(j,l)}}|\hat{H}|\bm{n_j}}\neq 0$.
Thus, we need to extract information of particle configuration from input register $\ket{j}$ and then calculate all non-zero entries and their locations.
Recall that our Hamiltonian $\hat{H}$ has the sparsity of $mc2^d$.
This comes from the three facts that, for some configuration of particles, 
the number of sites where there exists at least one particle is at most $m$, that, for each site, 
there exist at most $c$ terms of Hamiltonian which act on the site, and that, for each term of Hamiltonian, the particle number changes in at most $2^d$ ways by at most $d$ creation and annihilation operators.
This enables us to interpret the input uniform superposition state as:
\begin{equation}
\ket{j}\frac{1}{\sqrt{mc2^d}}\sum_{k=1}^{m}\sum_{l=1}^{c}\sum_{\delta_1,\dots,\delta_d = \{-1,1\}}\ket{k,l,\delta_1,\dots,\delta_d},
\end{equation}
where $k$ denotes the site where particles exist, $l$ denotes the term of Hamiltonian which act on the site,
and $\delta_1,\dots,\delta_d$ represent the creation and annihilation of particles.
Using this notation and some oracles provided below, we can implement the target unitary as follows:
\begin{enumerate}
    \item Oracle-1: Add $\log(N) + \log(m)$ ancilla qubits and calculate the site $\gamma_k(j)$ holding particles and its particle number $n_k(j)$ conditioned on $j$ and $k$:
    \begin{equation}
    \begin{split}
        &\ket{j}\frac{1}{\sqrt{mc2^d}}\sum_{k=1}^{m}\sum_{l=1}^{c}\sum_{\delta_1,\dots,\delta_d = \{-1,1\}}\ket{k,l,\delta_1,\dots,\delta_d} \\
        &\hspace{5.2cm}\ket{\gamma_k(j)} \ket{n_k(j)}.
    \end{split}
    \end{equation}
    \item Oracle-2: Add $d(\log(N) + \log(m))$ ancilla qubits and, conditioned on $\gamma_k(j)$ and $l$, calculate $d$ sites where the Hamiltonian acts on:
    \begin{equation}
    \begin{split}
        &\ket{j}\frac{1}{\sqrt{mc2^d}}\sum_{k=1}^{m}\sum_{l=1}^{c}\sum_{\delta_1,\dots,\delta_d = \{-1,1\}}\ket{k,l,\delta_1,\dots,\delta_d} \\
        &\ \ket{\gamma_k(j)} \ket{n_k(j)} \ket{\gamma_{k,l,1}}\dots\ket{\gamma_{k,l,d}} \ket{n_{k,l,1}}\dots\ket{n_{k,l,d}}.
    \end{split}
    \end{equation}
    Here, when $H_{k,l}$ denotes $l$-th term among the Hamiltonian terms which act on site $\gamma_k(j)$, $\gamma_{k,l,1} \dots \gamma_{k,l,d}$ denote the sites on which $H_{k,l}$ acts and $n_{k,l,1} \dots n_{k,l,d}$ are their particle number.
    \item Conditioned on $\delta_1,\dots,\delta_d$, transform the register storing the particle number as:
    \begin{equation}
    \begin{split}
        &\ket{j}\frac{1}{\sqrt{mc2^d}}\sum_{k=1}^{m}\sum_{l=1}^{c}\sum_{\delta_1,\dots,\delta_d = \{-1,1\}}\ket{k,l,\delta_1,\dots,\delta_d} \\
        &\hspace{2.5cm}\ket{\gamma_k(j)} \ket{n_k(j)} \ket{\gamma_{k,l,1}}\dots\ket{\gamma_{k,l,d}} \\
        &\hspace{3.3cm}\ket{n_{k,l,1}+\delta_1}\dots\ket{n_{k,l,d}+\delta_d}.
    \end{split}
    \end{equation}
    This operation can be done by a controlled version of an addition operator using $O(2^d \log(m))$ gates~\cite{gidney2018halving, berry2019qubitization}.
    \item Oracle-3: Conditioned on $\gamma_k(j)$, $l$ and $\ket{n_{k,l,1}+\delta_1}\dots\ket{n_{k,l,d}+\delta_d}$, calculate the entry of  the Hamiltonian and rotate the single qubit:
    \begin{equation}
    \begin{split}
        &\ket{j}\frac{1}{\sqrt{mc2^d}}\sum_{k=1}^{m}\sum_{l=1}^{c}\sum_{\delta_1,\dots,\delta_d = \{-1,1\}}\ket{k,l,\delta_1,\dots,\delta_d} \\
        &\ket{\gamma_k(j)} \ket{n_k(j)} \ket{\gamma_{k,l,1}}\dots\ket{\gamma_{k,l,d}} \ket{n_{k,l,1}+\delta_1}\dots\ket{n_{k,l,d}+\delta_d}\\
        &\left( \sqrt{\frac{\hat{H}_{c(j,k,l,\delta_1,\dots,\delta_d),j}}{\hat{H}_{\mathrm{max}}}}\ket{0} + \sqrt{1 - \left|\frac{\hat{H}_{c(j,k,l,\delta_1,\dots,\delta_d),j}}{\hat{H}_{\mathrm{max}}}\right|}\ket{1} \right).
    \end{split}
    \end{equation}
    \item Oracle-4: Conditioned on $j$ and $\ket{\gamma_{k,l,1}} \dots \ket{\gamma_{k,l,d}} \ket{n_{k,l,1}+\delta_1} \dots \ket{n_{k,l,d}+\delta_d}$, calculate the index $c(j,k,l,\delta_1,\dots,\delta_d)$ corresponding to the particle configuration after the Hamiltonian is applied.
    \begin{equation}
    \begin{split}
        &\ket{j}\frac{1}{\sqrt{mc2^d}}\sum_{k=1}^{m}\sum_{l=1}^{c}\sum_{\delta_1,\dots,\delta_d = \{-1,1\}}\ket{k,l,\delta_1,\dots,\delta_d} \\
        &\hspace{0.5cm}\ket{\gamma_k(j)} \ket{n_k(j)} \ket{\gamma_{k,l,1}}\dots\ket{\gamma_{k,l,d}} \\
        &\hspace{1cm} \ket{n_{k,l,1}+\delta_1}\dots\ket{n_{k,l,d}+\delta_d} \ket{c(j,k,l,\delta_1,\dots,\delta_d)} \\
        &\left( \sqrt{\frac{\hat{H}_{c(j,k,l,\delta_1,\dots,\delta_d),j}}{\hat{H}_{\mathrm{max}}}}\ket{0} + \sqrt{1 - \left|\frac{\hat{H}_{c(j,k,l,\delta_1,\dots,\delta_d),j}}{\hat{H}_{\mathrm{max}}}\right|}\ket{1} \right).
    \end{split}
    \end{equation}
    \item Uncomputing step 1, 2 and 3, we obtain the desired state
    \begin{equation}
    \begin{split}
        &\ket{j}\frac{1}{\sqrt{mc2^d}}\sum_{k=1}^{m}\sum_{l=1}^{c}\sum_{\delta_1,\dots,\delta_d = \{-1,1\}}\\
        &\quad\ket{k,l,\delta_1,\dots,\delta_d}
        \ket{c(j,k,l,\delta_1,\dots,\delta_d)} \\
        &\left( \sqrt{\frac{\hat{H}_{c(j,k,l,\delta_1,\dots,\delta_d),j}}{\hat{H}_{\mathrm{max}}}}\ket{0} + \sqrt{1 - \left|\frac{\hat{H}_{c(j,k,l,\delta_1,\dots,\delta_d),j}}{\hat{H}_{\mathrm{max}}}\right|}\ket{1} \right).
    \end{split}
    \end{equation}
\end{enumerate}

Our construction is based on the sparse-access model rather than other methods such as linear combination of unitaries technique.
Operationally, we can also implement the block-encoding by taking a linear combination of the block-encoding of each term $\hat{k}_j \prod_{l \in p \setminus \{j\}} \hat{x}_l$.
However, this implementation is inefficient since the Hamiltonian has $\mathrm{poly}(N)$ terms, which leads to the gate complexity of $\mathrm{poly}(N)$.

\end{document}